\documentclass[acmsmall, authorversion,nonacm]{acmart}
\usepackage{graphicx} 

\usepackage{tikz}
\usetikzlibrary{shapes}
\usetikzlibrary{bayesnet}
\usetikzlibrary{positioning}
\usetikzlibrary{fit}
\usetikzlibrary{backgrounds}
\usetikzlibrary{calc}

\usepackage{amsmath,amssymb,amsthm}

\definecolor{DarkGreen}{rgb}{0.1,0.5,0.1}

\definecolor{Purple}{rgb}{0.4,0.1,0.5}

\usepackage{amsthm}
\theoremstyle{definition}
\newtheorem{theorem}{Theorem}
\newtheorem{definition}[theorem]{Definition}

\title{The Design and Composition of Structural Causal Decision Processes}

\author{Sebastian Benthall}
\authornote{Corresponding author.}
\affiliation{
  \institution{International Computer Science Institute}
  \country{USA}
}
\affiliation{
  \institution{New York University}
  \country{USA}
}

\author{Alan Lujan}
\affiliation{
  \institution{Johns Hopkins University}
  \department{Krieger School of Arts and Sciences}
  \country{USA}
}

\begin{document}

\begin{abstract}
We present two new classes of causal models of decision-making agents.
Our approach is motivated by the needs of modeling the economics of computing systems. These systems are composed of subsystems and can exhibit endogenous limits on cognitive resources and value discounting.
Structural Causal Decision Models (SCDMs) expand on Structural Causal Influence Models.
Like SCIMs, they explicitly represent the causal relationships between model variables and the payoffs of agent decisions.
Additionally, agent decisions can be constrained by their causal antecedents, and SCDMs can have open root variables for which no probability distribution or structural equation is given.
We show that SCDMs have a well-defined and computationally useful property of composability.
Building on SCDMs, we then define a Structural Causal Decision Process (SCDP) as a recurring SCDM with a discount variable.
SCDPs benefit from the useful composition properties of SCDMs.
Moreover, SCDPs are strictly more expressive than POMDPs because they do not assume rational belief formation.
Indeed, an SCDP can endogenously model the memory-formation process, and is thus useful for modeling resource rational agents in dynamic settings.
SCDPs are also capable of modeling variable discounting, a tool used widely in social scientific modeling.
We pose that SCDPs are a useful framework for policy simulation for the digitial economy, mechanism design for information systems, and digital twin modeling of cyberinfrastructure.
\end{abstract}

\maketitle

Computing systems today are composed of many subsystems, which can include technical components as well as natural and artificially intelligent agents.
Interactions between these components and agents consist of flows of data, which is then stored and used intelligently by other actors.
Realistic models of these systems must be decomposable and reflect endogenous limits on cognitive resources. 
This paper presents a formal framework for constructing such models.
We position this as a new method for flexibly representing potentially intelligent agents in an economy\cite{axtell2025agent}.
Our approach is also consistent with structural causal modeling \cite{pearl2009causality} approaches from computer science and statistics.
We will also perform computational complexity analysis to demonstrate the value of composability to the performance of solution algorithms.

We anticipate applications of this framework to include:
\begin{itemize}
    \item \textbf{Policy simulation.} Exploring the consequences of public policy on the digital economy.
    \item \textbf{System design.} Modeling data flows in dynamic systems that include intelligent agents, with parameters that can be controlled for mechanism design.
    \item \textbf{Digital twin modeling.} Realistic modeling that combines strategic actors with physics-informed neural networks (PINN) for scenario-based risk analysis.
\end{itemize}

The framework developed in this paper stands on much prior work.
Most closely, it builds to something similar to a dynamic influence model \cite{neapolitan2004learning}. However, we build on more recent work on structural causal games \cite{hammond2023reasoning} as well as the modeling concerns of computational economics, and more recent advances in deep learning, to arrive at a new design.

In this paper, we will focus on the definition of composable, causal environments for a single agent.
We see this as the first step.
Future work will expand the definition to multi-agent environments,
and address algorithms for learning the equilibrium strategies of the agents and using the models for statistical analysis.

\subsection{Contributions}

Section \ref{sec:prior-work} situates this article in prior literature and recent developments in computational economics and social science, AI safety, composable world models, and variable discounting.

Section \ref{sec:scdm} introduces the Structural Causal Decision Model, a variation on Structural Causal Influence Models (see Definition \ref{def:scim}, \cite{hammond2023reasoning}) that supports constraints on decision variables and root variables with no probability distribution.
We will show how SCDMs can be composed and decomposed, and how, in a special case of sequential decomposition, this readily improves the efficiency of solving for optimal decision rules.
This is illustrated with 2-period consumption saving problems, one of which introduces habit-formation.

Section \ref{sec:scdp} expands on the SCDM to show how it can be converted into a dynamic stochastic optimization problem.
The new construct, a Structural Causal Decision Process (SCDP), can also be decomposed to improve solution efficiency.
This is illustrated with a dynamic model that includes both consumption and portfolio allocation.

Section \ref{sec:scdp-pomdp-rr} discusses the expressive potential of SCDPs, showing that they are more expressive than both MDPs and POMDPs.
We demonstrate that in exceeding the expressiveness of POMDPs, SCDPs are able to model resource rationality by endogenizing cognitive constraints and tradeoffs of agents.

Section \ref{sec:scdp-variable-discount} presents an adjustment to SCDPs that enables them to represent varying discount factors.
We show that an SCDP can express models from macroeconomics where the discount factor, or patience, of consumers varies as a Markov process.

In sum, this article builds up a new general formalism for representing dynamic optimization problems that is expressed in terms of Pearlian causation, has useful properties of compositionality, and can model resource rationality and variable or abstract discounting.
We intend SCDMs and SCDPs to be a new computational class of models that is well-suited to social scientific modeling of complex sociotechnical systems, such as human-AI hybrid ecosystems.
We present the formal framework and motivation for these models in anticipation of ongoing work that expands them to represent multi-agent systems, and develops efficient algorithms for fitting these models to data and solving them for optimal decision rules.

The implementing project is called \texttt{scikit-agent}.
It's documentation\footnote{\url{https://scikit-agent.github.io/scikit-agent/}} and source code\footnote{\url{https://github.com/scikit-agent/scikit-agent}} are available online.

\begin{figure}
\begin{tikzpicture}[
    node distance=1.2cm and 1.5cm,
    every node/.style={font=\small}
]

\draw[thick, ->] (0,0) -- (0,5) ;
\node[above, rotate=90, anchor = south] at (0,3) {level};
\draw[thick, ->] (0,0) -- (6,0);
\node[below] at (3,0) {agents};

\node at (1.5, 5) {--};
\node[color = blue] at (3, 5) {SCDP};
\node at (4.5, 5) {--};

\node at (1.5, 4) {--};
\node[color = blue] at (3, 4) {SCDM};
\node at (4.5, 4) {--};

\node at (1.5, 3) {SCM};
\node at (3, 3) {SCIM};
\node at (4.5, 3) {SCG};

\node at (1.5, 2) {CBN};
\node at (3, 2) {CID};
\node at (4.5, 2) {CG};

\node at (1.5, 1) {BN};
\node at (3, 1) {ID};
\node at (4.5, 1) {MAID};

\node[anchor=west, align=left] at (7, 3.5) {
    B\quad Bayesian\\
    C\quad Causal\\
    D\quad Decision/Diagram\\
    G\quad Game\\
    I\quad\phantom{M}Influence\\
    M\quad Model\\
    MA\quad Multi-Agent\\
    N\quad Network\\
    P\quad Process\\
    S\quad Structural
};

\end{tikzpicture}
\caption{We extend the taxonomy of causal models from \citet{hammond2023reasoning}, and in this paper introduce SCDM and SCDP.
In the original diagram, the causal hierarchy (associational,
interventional, and counterfactual) forms the vertical axis and the number of agents (0, 1, and
$N$) forms the horizontal axis.
We add two additional elements to the causal hierarchy: functional, and dynamic. This paper introduces two new classes of causal models, Structural Causal Decision Models (SCDM) and Structural Causal Decision Processes (SCDP) that illustrate these steps in the hierarchy.}
\end{figure}

\section{Prior work}
\label{sec:prior-work}

This work aims to synthesize recent developments in several areas of social and computer science as a foundation for asking and answering new kinds of questions about computing systems.

\subsection{Economics and computational social science}

Agent-based modeling (ABM) has intrigued social scientists for decades.
Computational tools and ideas from physics informed computational sociology and enabled early work on studying ``artificial societies'' \cite{epstein1996growing}.
ABM has since been understood to be a bridge between disciplines \cite{axelrod2006agent}.
On the other hand, economics has typically viewed ABM with some skepticism, preferring models that are more explicitly decision-theoretic \cite{blume2015agent}.

However, in their recent reformulation of the relationship between ABM and economics, 
\citet{axtell2025agent} paint an inspiring picture of its current and future role within the field.
The availability of ground truth micro-data and cheaper compute has made mathematical assumptions about aggregate or emergent level phenomena, which has been the mainstay of many earlier methods, less appealing than direct simulation of large populations of simulated agents.
These simulations allow the modeler to relax strict and often unrealistic assumptions of ``rational expectations'' and address cases of bounded rationality.
That lets modelers achieve new levels of detail and realism.
Central banks stand out as an example of vital institutions that use fine-grained, multi-sector ABMs to study economic outcomes.
A single ABM can address several different aspects of the economy, including interbank credit markets, financial markets like the stock market, housing markets, and even the effects of climate change \cite{borsos2025agent}.

The use of ABMs for policy simulation is not without challenges.
When, because of model complexity, agents can only approximately optimize their behavior, this can lead to unrobust learning, which can be mitigated by careful training procedures 
\cite{agrawal2025robust}.
We maintain that given a formally well-defined decision-theoretic and structural model, these challenges may be surmountable.



\subsection{Computer science and AI safety}

We draw on techniques from the field of artificial intelligence.
While economics has been slow to take up Pearl's causal graphical modeling approach \citet{pearl1994probabilistic} , this is common in computer science.
Causal modeling has been used extensively 
in modeling the social effects of automated systems in such fields as fairness \cite{creager2020causalmodelingfairnessdynamical, mhasawade2021causalmultilevelfairness} and privacy \cite{benthall2019situated}. 
More recent work has combined this causal modeling of the impact of AI systems with explicit modeling of the intelligence -- or goal-orientedness -- of the AI systems themselves to addres the problem of aligning AI systems with human users.

The first synthesis of Pearlian causation and game theory was done by \citet{koller2003multi} with the introduction of Multi-Agent Influence Diagrams (MAIDs).
This work shows that Bayesian networks can be augmented with decision and utility variables that are assigned to different agents, and that this provides a compact way to represent multi-agent games that would be intractable to model in extensive form.
New formal work has refined and expanded on the MAID framework and better aligned it with structural causal modeling, introducing Structural Causal Influence Models (SCIMs) and Structural Causal Games (SCG) \cite{hammond2021equilibrium, hammond2023reasoning}.
There is a software implementation, \texttt{PyCID}, that encapsulates some of the work of this team of researchers \cite{fox2021pycid}.

These new frameworks have been used to model AI safety issues, such as \emph{reward tampering} where an AI system is able to directly modify its reward function \cite{everitt2021reward}, proposing AI system designs that would prevent the use of proxies for protected categories in classification tasks, and prevent a recommendation system from polarizing the politics of its users \cite{everitt2021agent}, as well as operationalizing what it means for an AI system to be honest, deceptive \cite{kenton2023discovering}, or manipulative \cite{carroll2023characterizing}.

Beyond its usefulness for modeling AI and sociotechnical systems, there is a strong case to be made that causal representations of the world are necessary for robustness to distributional shifts \cite{richens2024robust}.
In general, causal learning and modeling is considered by many to be the gold standard of machine intelligence, and we intend our framework to adhere to this standard.

\subsection{Composability}
\label{sec:composability}

A further motive for our approach is the desire for \emph{composability} in modeling and simulation.
By this we mean that it is useful if models can be decomposed into components that can be analyzed independently, and if they can be composed into new models.
This is motivated by concerns that arise in both optimization, and in simulation.

\paragraph{In optimization}
Composability can improve the tractability of an optimization problem embedded in the decision-theoretic model.
One of the attractive properties of the Multi-Agent Influence Diagram \cite{koller2003multi} paradigm is that it allows a model to be decomposed into subgames which can be more efficiently solved to discover Nash equilibrium strategy profiles.
In computational economics, \citet{lujan2026egmn} shows how a complex consumption-saving problem can be decomposed into a multi-stage problem, which improves the computational efficiency of solving it.

Several lines of prior work have explored how to use the decomposability of the transition function of a Markov Decision Process (MDP) to improve training performance. These approaches include factoring the subgraphs of a dynamic Bayesian network \cite{boutilier2000stochastic}, clustering the state space based on the transition function \cite{mannor2004dynamic}, and decomposing the problem into a hierarchy of MDP subtasks \cite{dietterich2000hierarchical}.

\paragraph{In simulation} Beyond that concern, composability is a useful property when using models and simulation to understand and design complex systems.
Complex systems are defined as those that consist of multiple distinct parts which interact to produce emergent properties that cannot be easily reduced to individual behavior.
So modeling these systems generally must proceed from the bottom up.
Crucially, the microstructure of these components can be empirically validated separately from the emergent outcomes.
Thus, compositional systems for modeling and simulation have a long 
history \cite{kasputis2000composable, august2005achieving, balci2011achieving},
especially for the purpose of system design \cite{paredis2001composable}, digital twins \cite{van2023achieving}, and
studying complex systems \cite{zhu2019reusability, wagner2024comparing}.
Recently, AI researchers have turned their attention to composable world models \cite{sehgal2023neurosymbolic} and environments for LLMs \cite{zhang2024combo, wang2025modeling} and robot testing \cite{ortega2024composable}.
Researchers have grounded composability of simulations in terms of causality \cite{wang2022respecting} and interconnection structure \cite{neary2023compositional} when simulating physical systems, suggesting the viability of a Pearlian approach that is both causal and graphical.

\subsection{Resource rationality}
\label{sec:resource-rationality}

One recent development in the modeling of cognition -- whether human or artificial -- is the emergence of \emph{resource rational analysis} \cite{lieder2020resource}, a reformulation of earlier versions of \emph{rational analysis} in cognitive science \cite{chater1999ten, anderson1991human}.
Rational analysis attempts to provide teleological explanations of cognitive operations as being a rational action given the incentive structure of the cognitive system.
It interprets cognitive behavior as being directed towards goals even when it appears at first to be suboptimal.
It is used as a way of narrowing down potential hypotheses to those which have evolutionary plausibility.
One of the considerations in rational analysis is the availability of cognitive resources to solve problems; limited resources are one reason why people or animals may employ heuristics to solve a complex cognitive problem.

Resource rational analysis restates the core theses of rational analysis while repositioning them on a decision-theoretic framework in which the cost of cognitive resources are explicitly represented in the utility function of the agent.
Agents choose to optimize their decision rule or policy taking into account both expected results and cognitive costs. This is proposed as a useful tool for modeling realistic agents.
The concept of resource rationality has been taken up by recent proposals about AI safety and alignment, because of the challenges of aligning an AI system with humans who have limited cognitive abilities and perhaps not even rational preference structures \cite{zhi2025beyond, levine2025resource}.
One motive for the modeling framework presented in this article is that it provides a way to model dynamic optimization problems in which cognitive resources are endogenously chosen and costly.

\subsection{Variable discounting}
\label{sec:lit-review-variable-discounting}

In dynamic programming and reinforcement learning settings, there is normally a constant discount factor $\gamma $ or $\beta \in (0,1)$.
However, in social scientific modeling, there are many uses of variable discount factors that may be subject to exogenous or even endogenous processes.
We briefly discuss this work to motivate the inclusion of variable discounting in the SCDP framework we introduce.

\paragraph{Time inconsistency and hyperbolic discounting} The seminal work of \citet{strotz1956myopia} challenged the orthodoxy of exponential discounting by showing that time-variant discount rates imply dynamically inconsistent preferences: plans that are optimal today may no longer appear optimal tomorrow, even absent new information. This insight launched a large literature in behavioral economics.

\citet{laibson1997golden} introduced quasi-hyperbolic (or $\beta$-$\delta$) discounting, which approximates hyperbolic discounting in discrete time while retaining analytical tractability. Under this specification, agents discount the immediate future more heavily than distant periods, capturing the ``present bias'' observed in experimental settings. A key result is that sophisticated agents with quasi-hyperbolic preferences undersave relative to the exponential benchmark. \citet{harris2001dynamic} extended this analysis, proving existence of equilibrium and deriving a generalized Euler equation for hyperbolic consumers.

\paragraph{State-dependent discounting} More recently, \citet{stachurski2021dynamic} generalized discrete-time infinite-horizon dynamic programming to allow the discount factor to depend on the state. Rather than requiring $\beta < 1$ uniformly, they impose that the discount factor process be strictly less than one \emph{on average in the long run}. Under this condition, the standard optimality results are recovered: Bellman's principle of optimality holds, and both value function iteration and policy function iteration converge. Their framework accommodates recursive preferences and unbounded rewards, with applications to asset pricing and interest rate dynamics.

\paragraph{Quantitative macroeconomics} Variable discounting plays an important role in heterogeneous-agent macroeconomics. \citet{krusell1998income} introduced discount factor heterogeneity to match the extreme concentration of wealth observed in the data: agents with higher $\beta$ accumulate more assets in the long run. \citet{krusell2003consumption} embedded quasi-geometric discounting in the neoclassical growth model and showed that the resulting intrapersonal game admits a continuum of equilibria, complicating both theory and computation. \citet{krusell2002equilibrium} analyzed welfare and policy implications in this setting. More recently, \citet{cao2020recursive} proved existence of recursive equilibrium in the Krusell-Smith economy with state-dependent discount factors, extending the theoretical foundations for this class of models.

\vspace{1em}

Having situated and motivated our modeling approach, we will now proceed to present its core features.
We will draw examples from computational economics, employ causal modeling tools from computer science, and demonstrate how our system enables composability.

\section{Structural Causal Decision Models (SCDM)}
\label{sec:scdm}

We have motivated this work with reference to a broad scope of social scientific modeling.
While this work is in service to that agenda of modeling multi-agent systems, in this paper we will focus on the special case of a single agent.

\subsection{Example: Two-period consumption model}

Before presenting the formal definition of a Structural Causal Decision Model, we illustrate the core concepts with a canonical example from economics: the Fisher two-period consumption problem \citep{fisher1930theory}.

Consider a consumer who lives for two periods and seeks to maximize lifetime utility:
$$\max_{c_1, c_2} u(c_1) + \beta u(c_2)$$
where $c_t$ denotes consumption in period $t$, $u(\cdot)$ is a strictly increasing, strictly concave utility function, and $\beta \in (0,1)$ is a time preference factor.

The consumer begins period 1 with resources $b_1$ (for ``bank balances'')\footnote{We use $b_t$ here to align with standard notation in the consumption literature; later sections use $w_t$ for wealth when discussing more general dynamic models.} and income $y_1$. Resources not consumed become end-of-period assets $a_1$:
$$a_1 = b_1 + y_1 - c_1.$$
These assets earn a gross return $R = 1 + r$, yielding period 2 resources $b_2 = R \cdot a_1$. In the final period, the consumer receives income $y_2$. They then choose $c_2 < b_2 + y_2$ to maximize utility.
The optimal policy at this stage is, quite trivially, to consume all available resources $c_2 = b_2 + y_2$.

At first glance, this problem appears to involve three interrelated choices: how much to consume today ($c_1$), how much to save ($a_1$), and how much to consume tomorrow ($c_2$). However, inspection of the structural equations shows that these apparent decisions collapse into a single degree of freedom. Once the consumer chooses $c_1$, the asset equation determines $a_1$, which in turn determines $b_2$.
The decision rule for $c_2$ is very simple to derive and compute.

This observation motivates the formal framework we develop below. We distinguish between \emph{state variables} that describe the system at a point in time ($b_1$, $b_2$), \emph{decision variables} under the agent's control ($c_1, c_2$), and \emph{utility variables} that enter the objective ($u(c_1)$, $u(c_2)$). These are linked by \emph{structural equations} (deterministic relationships such as $a_1 = b_1 + y_1 - c_1$) and subject to \emph{constraints} on feasible decisions ($0 \leq c_1 \leq b_1 + y_1; 0 \leq c_2 \leq b_2 + y_2$).

The structure also exhibits natural decomposition, illustrated in Figure~\ref{fig:two-period-id}. The variable $a_1$ serves as a ``bridge'' connecting two stages: period 1, where the agent chooses $c_1$ given $b_1$, and period 2, where $c_2$ is determined given $b_2$. This decomposition has computational significance: because $c_2$ is fully determined by the bridge variable, the second-stage ``problem'' can be solved first, yielding a continuation value function $v_2(b_2)$; this value function then informs the first-stage optimization. The sequential structure reduces a joint optimization over $(c_1, c_2)$ to a sequence of simpler problems, foreshadowing the decomposition techniques we develop in later sections.

\begin{figure}[h]
\centering
\begin{tikzpicture}[
    state/.style={circle, draw, minimum size=0.9cm, font=\small},
    decision/.style={rectangle, draw, minimum size=0.9cm, font=\small},
    utility/.style={diamond, draw, minimum size=0.7cm, font=\small, inner sep=1pt},
    noise/.style={circle, draw, double, minimum size=0.9cm, font=\small},
    discount/.style={regular polygon, regular polygon sides=5, draw, minimum size=1.1cm, font=\small, inner sep=1pt},
    >=stealth
]

\node[state] (b1) at (0, 0) {$b_1$};
\node[state] (a1) at (5, 0) {$a_1$};
\node[state] (R) at (6.5, -1) {$R$};
\node[state] (b2) at (8, 0) {$b_2$};

\node[state] (y1) at (0, -2) {$y_1$};
\node[state] (y2) at (8, -2) {$y_2$};

\node[decision] (c1) at (1.5, -4) {$c_1$};
\node[utility] (u1) at (3, -4) {$u_1$};
\node[decision] (c2) at (9.5, -4) {$c_2$};
\node[utility] (u2) at (11, -4) {$u_2$};

\node[discount] (beta) at (10, -2) {$\beta$};

\draw[->] (b1) -- (c1);
\draw[->] (y1) -- (c1);

\draw[->] (c1) -- (u1);

\draw[->] (b1) -- (a1);
\draw[->] (y1) -- (a1);
\draw[->] (c1) -- (a1);

\draw[->] (a1) -- (b2);
\draw[->] (R) -- (b2);

\draw[->] (b2) -- (c2);
\draw[->] (y2) -- (c2);

\draw[->] (c2) -- (u2);

\draw[->] (beta) -- (u2);

\begin{scope}[on background layer]
    \node[draw=gray, dashed, rounded corners, fit=(b1)(y1)(c1)(u1)(a1), inner sep=0.4cm] (box1) {};
    \node[draw=gray, dashed, rounded corners, fit=(a1)(R)(b2)(y2)(c2)(u2)(beta), inner sep=0.4cm] (box2) {};
\end{scope}

\node[below=0.3cm of box1.south, font=\footnotesize\itshape] {Period 1};
\node[below=0.3cm of box2.south, font=\footnotesize\itshape] {Period 2};

\end{tikzpicture}
\caption{Influence diagram for the two-period consumption problem. Circles denote state variables, rectangles denote decision variables, diamonds denote utility, pentagons denote discount factors. The bridge variable $a_1$ (end-of-period savings) connects the two periods; the boxes overlap on this node to indicate it belongs to both components. The edge from $\beta$ to $u_2$ indicates discounting of future utility.}
\label{fig:two-period-id}
\end{figure}

\subsection{Structural Causal Decision Models}

This section introduces a mathematical construct, which we will call a Structural Causal Decision Model (SCDM).
This is closely akin to a different structure, the Structural Causal Influence Model of \citet{hammond2023reasoning, koller2003multi}, with a few key differences. (SCIMs are defined in Definition \ref{def:scim} in the Appendix.)
Like the SCIM, it builds on both the prior work on influence models \cite{shachter1986evaluating}, which represent decision-theoretic problems with a graph, and Pearlian graphical causal modeling \cite{pearl1994probabilistic}.
And SCDM augments an SCIM by addign explicit constraints on the optimization problem, as well as open 'inputs' to the model, which we will call 'root variables'.

We will use $\mathbf{Pa}_V$ to denote the parents of a variable $V$, given a graph; $\mathbf{pa}_V$ is a potential value of the parents of that variable.
"$\text{dom}$" refers to the function that returns the domain (the set of all possible values) of a variable.

\begin{definition}[Structural Causal Decision Model (SCDM)]
A \emph{structural equation influence model} $(\mathbf{V}, \mathbf{Z}, \mathcal{E}, (\mathbf{X}, \mathbf{D}, \mathbf{U}), \mathbf{Pr}, \mathbf{f}, \boldsymbol{\theta})$ consists of:
\begin{itemize}
\item A set of endogenous variables $\mathbf{V}$
\item A set of exogenous variables $\mathbf{Z}$.
\item An edge set $\mathcal{E}$ such that the graph $\mathcal{G} = (\mathbf{V} \cup \mathbf{Z}, \mathcal{E})$ is a DAG over $\mathbf{V} \cup \mathbf{Z}$, in which $\forall Z \in \mathbf{Z}, \mathbf{Pa}_Z = \varnothing$.
\item $\mathbf{V}$ is partitioned into: \begin{itemize}
\item $\mathbf{X}$, state variables
\item $\mathbf{D}$, decision variables\footnote{The influence diagram literature rarely intersects with the control theory literature. Decision variables and control variables are roughly synonymous.}
\item $\mathbf{U}$, utility variables
\end{itemize}
\item A parameterized probability distribution $\mathbf{Pr}(\boldsymbol{\theta_E})$ over all exogenous variables $\mathbf{Z}$.
\item $\mathbf{f} = \{f_V: \text{dom}(\mathbf{Pa}_V) \times \text{dom}(\boldsymbol{\theta}) \rightarrow \text{dom}(V) | V \in \mathbf{V} \setminus \mathbf{D} \text{ and } \mathbf{Pa}_V \neq \emptyset \}$  are structural equations governing all variables in $\mathbf{V}$ that are not decision nodes or parentless (root) nodes.
\item $\mathbf{\Gamma} = \bigcup_{D \in \mathbf{D}} \{\Gamma_D\}$ where $\Gamma_D : \text{dom}(\mathbf{Pa}_D) \times \text{dom}(\boldsymbol{\theta}) \rightarrow \mathbb{P}(\text{dom}(D))$ are constraints on the actions allowable at the decision variables.
\item Parameters $\boldsymbol{\theta}$
\end{itemize}
\label{def:scdm}
\end{definition}

SCDM is a departure from SCIM (Definition \ref{def:scim}, \citet{hammond2023reasoning}) in just a few ways:

\paragraph{Constraints} We have introduced the decision constraints $\boldsymbol{\Gamma}$, which are not included in the original definition. Introducing these constraints makes it easier to synthesize the SCIM and SCG constructs with the binding constraints of many control theory problems. An example constraint is the budget constraint ($c_1 \leq b_1$) in the consumption saving problem.

\paragraph{Structural functions} Rather than expressing the relationships between variables as ``deterministic conditional probability distributions'', we define deterministic functions $\mathbf{f}$ for each endogenous non-decision variable. Conversion to a conditional probability distribution is straightforward, and so this is largely a matter of style. However, unlike an SCIM, it is possible to have a state variable over which there is no \emph{unconditional} probability distribution or governing structural equation. These are the variables $V$ without parents.

\paragraph{Limiting noise} In a classic Pearlian structural causal model, each deterministic variable has a noise term $\epsilon_V$. While SCDMs can express such a thing, as a matter of style we indicate each exogenous noise variable separately.
\paragraph{Continuous valued models.} While both SCIMs and SCDMs are general with respect to whether variables are continuous or discrete, we have designed SCDMs to be well-suited to models with continuous state, shock, and decision values. The structural functions $f \in \mathbf{f}$ may be differentiable or invertible, which opens up more efficient solution methods \cite{lujan2026egmn}
\vspace{1em}

We will not ground this form of modeling in measure theory in this paper.
We will assume that utility variables $\mathbf{U}$ range over the real numbers for the purposes of presentation, though the structure is more general than this.

\subsubsection{Roots and leaves}

We allow an SCDM to have state variables $X$ with no parents, $\mathbf{Pa}_X = \varnothing$ (this is a departure from SCIM models). There are also variables with no descendants.

\begin{definition}[Roots and leaves]
    Given an SCDM $\mathcal{M} = (\mathbf{V}, \mathbf{Z}, \mathcal{E}, (\mathbf{X}, \mathbf{D}, \mathbf{U}), \mathbf{Pr}, \mathbf{f}, \boldsymbol{\theta})$, let the \emph{roots} of the SCDM be  $\tilde{\mathbf{X}} = \{X | X \in \mathbf{X} \text{ and } \mathbf{Pa}_X  = \varnothing \}$, i.e., the chance variables with no parents. Let the \emph{leaves} of the SCDM be the variables $V \in \mathbf{V}$ with no descendants.
\end{definition}

Note that while exogenous shock variables in $\mathbf{Z}$ have no parents, they are not considered to be root variables in this sense of the term.
For an SCDM, root variables will have no governing structural equation or probability distribution.

The SCDM can be seen as a stochastic function from root values $\mathbf{\tilde{x}} \in \text{dom}(\mathbf{\tilde{X}})$ to the values of all other variables, including the utility variables.
The root nodes begin the flow of causation through the model.
At the decision variables, an agent chooses a rule that governs the flow of values.

\subsubsection{Decisions}

SCDMs are in the family of \emph{influence models} because they define a decision-theoretic problem for the agent.

\begin{definition}[Decision rule]
    Given an SCDM $(\mathbf{V}, \mathbf{Z}, \mathcal{E}, (\mathbf{X}, \mathbf{D}, \mathbf{U}), \mathbf{Pr}, \mathbf{f}, \boldsymbol{\theta})$, a decision rule $\pi_D$ for a decision variable $D \in \mathbf{D}$ is a function $\pi_D(\mathbf{Pa_D})$ of the form $\pi_{D}: \text{dom}(\mathbf{Pa}_D)  \rightarrow \text{dom}(D)$, such that $\forall \mathbf{p} \in \text{dom}(\mathbf{Pa}_D), \pi_D( \mathbf{p}) \in \Gamma_D( \mathbf{p})$.\footnote{In other words, the decisions according to the rule must fall within the decision constraints.} A set of decision rules $\mathbf{\pi}$ for all decision variables $\mathbf{D}$ is called a \emph{policy profile}.\footnote{Mixed strategy decision rules can be introduced into SCDMs by including an exogenous noise variable as a parent of the decision nodes.}
\end{definition}

A policy profile and values for the root variables provide all that is needed to \emph{induce} an SCDM into a fully specified Structural Causal Model (see Definition \ref{def:scm}), which characterizes the joint distribution over all model variables.
This is done by turning each decision variable $D$ into a state variable governed by the corresponding decision rule $\pi_D$, and assigning values to the root variables.

For any variable $V \in \mathbf{V}$, we can construct the deterministic function $f^*_V(\tilde{\mathbf{x}}, \mathbf{z}, \mathbf{\pi})$ for the value of that variable given the realization of root variables, shocks, and decision rules.

Under usual rationality assumptions, the agent's rational behavior is to choose the decision rules that maximize expected utility.
We can define the \emph{root value function} of the SCDM to be the maximum possible expected utility, given root values.

\begin{definition}[Root value function of an SCDM]
We define the \emph{root value function} of an SCDM $\mathcal{M} = (\mathbf{V}, \mathbf{Z}, \mathcal{E}, (\mathbf{X}, \mathbf{D}, \mathbf{U}), \mathbf{Pr}, \mathbf{f}, \boldsymbol{\theta})$ to be:

   \begin{equation}
        v_\mathcal{M}(\tilde{\mathbf{x}}) = \max_{\mathbf{\pi}} E_\mathbf{Z} \left[ \sum_{U \in \mathbf{U}} f^*_U( \mathbf{\tilde{X}} = \tilde{\mathbf{x}}, \mathbf{z}, \boldsymbol{\pi}) \right]
        \label{eq:scim-root-value}
    \end{equation}
    \label{def:scim-root-value}
\end{definition}

The function $v_\mathcal{M}: \text{dom}(\tilde{\mathbf{X}}) \rightarrow \mathbb{R}$ is over the values of the root variables of $\mathcal{M}$ and reflects the optimal expected value of an agent subject conditioned on the given values of the root variables.
Note that the choice of policy profile $\pi$ occurs, logically, before the realization of the shocks.

Note that this is an optimization over decision rules, not an optimization over ``actions'', or values of decision variables.
This is a significant distinction, because the space of possible decision rules $\boldsymbol{\pi}$ is not the same as the space of possible functions from root values to decisions ($\text{dom}(\mathbf{\tilde{X}}) \rightarrow \text{dom}(\mathbf{D})$).
We draw a distinction between the optimal decision rules -- which are subject to the limited information available to the agent -- and the optimal actions which would be taken by an omniscient agent under the circumstances of the root values:

   \begin{equation}
        \mathbf{d}^*_\mathcal{M}(\tilde{\mathbf{x}}) = \arg \max_{\mathbf{d} \in \text{dom}(\mathbf{D})} E_\mathbf{Z} \left[ \sum_{U \in \mathbf{U}} f^*_U(\mathbf{\tilde{X}} = \tilde{\mathbf{x}}, \mathbf{z}, \mathbf{D} = \mathbf{d}) \right]
        \label{eq:scim-optimal-action}
    \end{equation}

SCDMs precisely define the information available at each decision node and for a given model, it is possible that a policy profile that guarantees optimal decision values does not exist.

\subsubsection{Composition}

We have discussed the motivations for model composition in Section \ref{sec:composability}.
Composability enables the reuse and independent empirical validation of model components.
We also anticipate ways in which composability can be leveraged to improve learning algorithms.
Sometimes composability can be leveraged for computational performance.

In this paper, we will focus on a special case of composition: cases where an SCDM can be sequentially decomposed, meaning that it can be decomposed into two components, such that one follows 'after' the other.
This has the computational benefit that the problem of selecting the optimal policy profile can be divided into nested subgames such that the last subgame can be solved first, and so on, through backwards induction.

The overall idea of composing SCDMs is quite simple.
The DAG of an SCDM can be partitioned into subgraphs, each defining its own SCDM.
Equivalently, this means that two SCDMs can be composed into a new, combined, model.
We use the fact that SCDMs have variables for which no values or probability distributions are assigned -- the roots $\tilde{\mathbf{X}}$.
These variables are the points where the two components can 'attach'.

\begin{definition}[SCDM Composition]
    Consider two SCDMs $$\mathcal{M}_1 = (\mathbf{V}_1, \mathbf{Z}_1, \mathcal{E}_1, (\mathbf{X}_1, \mathbf{D}_1, \mathbf{U}_1), \mathbf{Pr}_1, \mathbf{f}_1, \boldsymbol{\theta}_1)$$
    $$\mathcal{M}_2 = (\mathbf{V}_2, \mathbf{Z}_2, \mathcal{E}_2, (\mathbf{X}_2, \mathbf{D}_2, \mathbf{U}_2), \mathbf{Pr}_2, \mathbf{f}_2, \boldsymbol{\theta}_2)$$ 
    such that $\mathbf{V}_1 \cap \mathbf{V}_2 = \mathbf{Y}_1 \subseteq \tilde{\mathbf{X}}_2$. Then we define the \emph{composition} of the two models $\mathcal{M}_1 \circ \mathcal{M}_2$ as the tuple:

\begin{itemize}
    \item $\mathbf{V}_1 \cup \mathbf{V}_2$
    \item $\mathbf{Z}_1 \cup \mathbf{Z}_2$
    \item $\mathcal{E}_1 \cup \mathcal{E}_2$
    \item $(\mathbf{X}_1 \cup \mathbf{X}_2, \mathbf{D}_1 \cup \mathbf{D}_2, \mathbf{U}_1 \cup \mathbf{U}_2)$
    \item $\mathbf{Pr}_1 \cdot \mathbf{Pr}_2$
    \item $\mathbf{f_1} \cup \mathbf{f_2}$
    \item $\boldsymbol{\theta}_1 \cup \boldsymbol{\theta}_2$
\end{itemize}
    \label{def:composition}
\end{definition}

We will call $\mathbf{Y} = \mathbf{V}_1 \cap \mathbf{V}_2$ the \emph{bridge} of $\mathcal{M}_1 \circ \mathcal{M}_2$. We impose the restriction that for any $Y \in \mathbf{Y}$, then $Y$ is a root variable for either $\mathcal{M}_1$ or $\mathcal{M}_2$.

A composition of two SCDMs is an SCDM.
A \emph{decomposition} of an SCDM is a partitioning of an SCDM into two, such that $\mathcal{M}_0  = \mathcal{M}_1 \circ \mathcal{M}_2 $.
We will refer to this as $\mathbf{Y}$, dropping the subscript, when context makes this unambiguous.
Since bridge variables belong to both components by definition, they can be labeled by either period; in diagrams, we place each variable in the box corresponding to its subscript.

This is a general definition of compositions which does not rely on any assumptions about the model structure. For example, a model can be a composition of two other models that are entirely disjoint.
In the next section, we will focus on a special case of model composition that we will call sequential decomposition.

\subsection{Solving sequentially decomposed SCDMs}

\begin{figure}[h]
\centering
\begin{tikzpicture}[
    node distance=1.4cm and 1.8cm,
    state/.style={circle, draw, minimum size=0.9cm, font=\small},
    decision/.style={rectangle, draw, minimum size=0.9cm, font=\small},
    utility/.style={diamond, draw, minimum size=0.7cm, font=\small, inner sep=1pt},
    bridge/.style={circle, draw, very thick, minimum size=0.9cm, font=\small},
    noise/.style={circle, draw, double, minimum size=0.9cm, font=\small},
    >=stealth
]
\node[decision] (d1) at (0,1) {$\mathbf{D_1}$};

\node[bridge] (y) at (3,0) {$\mathbf{Y}$};

\node[decision] (d2) at (7,1){$\mathbf{D}_2$};

\node[state] (pad2) at (5,1){$\mathbf{Pa_{D_2}}$};

\node[utility] (u2) at (8,-1){$\mathbf{U}_2$};

\draw[->, dashed] (d1) -- (y);


\draw[->, dashed] (y) -- (pad2);

\draw[->] (pad2) -- (d2);

\draw[->, dotted, red] (y) -- (u2);

\draw[->, dashed] (d2) -- (u2);

\begin{scope}[on background layer]
    \node[draw=gray, dashed, rounded corners, fit=(d1)(y), inner sep=0.4cm] (box1) {};
    \node[draw=gray, dashed, rounded corners, fit=(y)(pad2)(d2)(u2), inner sep=0.4cm] (box2) {};
\end{scope}

\node[below=0.3cm of box1.south, font=\footnotesize\itshape] {$\mathcal{M}_1$};
\node[below=0.3cm of box2.south, font=\footnotesize\itshape] {$\mathcal{M}_2$};

\end{tikzpicture}
\caption{A model $\mathcal{M}$ is composed of $\mathcal{M}_1 \circ \mathcal{M}_2$. Indirect paths are represented by dotted edges. If all paths from the bridge nodes $\mathbf{Y}$ to reward nodes in the second component $\mathbf{U_2}$ have a member of $\mathbf{Pa}_{D_2} \cup \mathbf{D_2}$ on it, then $\mathbf{Y}$ is d-separated from $\mathbf{U_2}$ given those nodes. Under that condition the composition is orthomodular or, equivalently, sequential. An indirect path from $Y$ to $U_2$ which is not interrupted by $\mathbf{Pa}_{D_2} \cup \mathbf{D_2}$ (shown in red) breaks this orthomodularity condition.}
\label{fig:sequential-decomposition}
\end{figure}

In this section, we discuss solving a decomposed SCIM.
In prior work on influence diagrams, there is a distinction drawn between \emph{perfect recall}, when each decision node has direct information about all past observations and decisions, and \emph{sufficient recall}, when the agent has enough access to prior observations and decisions to make an optimal choice \citep{van2022complete}.
A related notion is that of \emph{strategic reliance} (see Definition \ref{def:strategic-reliance}). Roughly, if $D_2$ does not strategically rely on $D_1$, then the choice of optimal decision rule for $D_2$ does not depend on the choice of optimal decision for $D_1$.
\citet{koller2003multi} show that identifying the network of strategic reliance (see Definition \ref{def:strategic-reliance}) between decision nodes in an influence diagram allows the analyst to divide the problem of finding the optimal policy profile into efficient subgames.
In this section, we will consider a special case of sequential decomposition and how it affords the breaking down of the model's solution problem into smaller subproblems.

Consider  $\mathcal{M}_0 = \mathcal{M}_1 \circ \mathcal{M}_2$
In order to get computational power out of decomposing an SCIM, the decomposition has to separate the decision variables such that $\mathbf{D_2}$ does not strategically rely on $\mathbf{D_1}$.
Under these conditions, the optimal decision rules for $\mathbf{D_2}$ will not depend on the decision rules for $\mathbf{D_1}$, and as we will see this enables the problem to be solved through a simple backwards induction procedure.

We can guarantee this by imposing a graphical criterion on the decomposition. It relies on the definition of d-separation (see Definition \ref{def:d-separation}).

\begin{definition}[Orthomodularity]
Given $\mathcal{M}_0 = \mathcal{M}_1 \circ \mathcal{M}_2$  with bridge $\mathbf{Y}$, the decomposition is \emph{orthomodular} if and only if $\mathbf{Y}$ and $\mathbf{U_2}$ are d-separated given $\mathbf{D_2} \cup  \mathbf{Pa_{D_2}}$.
\label{def:orthomodular}
\end{definition}

Note that we are presenting orthomodularity as a sufficient, but not necessary, condition for the separability of component subgames of an SCDM.

\begin{theorem}
Given an orthomodular decomposition $\mathcal{M}_0 = \mathcal{M}_1 \circ \mathcal{M}_2$, 
then there is no $D_2 \in \mathbf{D_2}$ that strategically relies on $D_1 \in \mathbf{D_1}$.
\label{thm:non-reliant-decomposition}
\end{theorem}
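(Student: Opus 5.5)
The plan is to reduce the statement to the graphical characterization of strategic reliance from \citet{koller2003multi}, which we take to be the content of Definition~\ref{def:strategic-reliance}. By that characterization, $D_2$ strategically relies on $D_1$ only if $D_1$ is \emph{s-reachable} from $D_2$. This means that a new parent $\hat{D}_1$ attached to $D_1$ is \emph{not} d-separated from some utility node $U \in \mathbf{U}$ that is a descendant of $D_2$, given $D_2 \cup \mathbf{Pa}_{D_2}$. So it suffices to show that, for every $D_1 \in \mathbf{D_1}$ and $D_2 \in \mathbf{D_2}$, $\hat{D}_1$ is d-separated from every utility descendant of $D_2$ given $\{D_2\} \cup \mathbf{Pa}_{D_2}$.

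First I would fix the structure of the composed graph.
\begin{itemize}
\item Since $\mathbf{V}_1 \cap \mathbf{V}_2 = \mathbf{Y} \subseteq \tilde{\mathbf{X}}_2$, every edge in $\mathcal{E}_2$ points into a node of $\mathbf{V}_2 \setminus \mathbf{Y}$, and no $\mathcal{E}_1$ edge enters $\mathbf{V}_2 \setminus \mathbf{Y}$.
\item Hence every descendant of $D_2$ lies in $\mathcal{M}_2$, and the utility nodes at issue are in $\mathbf{U_2}$.
\item Any path in $\mathcal{G}_0$ from $\hat{D}_1$, which sits in the $\mathcal{M}_1$ side, to a node of $\mathbf{U_2}$ must pass through some $Y \in \mathbf{Y}$, because the two node sets meet only at $\mathbf{Y}$ and edges come from $\mathcal{E}_1 \cup \mathcal{E}_2$.
\end{itemize}
I would also note that the subpath from the last such $Y$ to $U$ lies entirely within $\mathcal{M}_2$.

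Next comes the main step. Take an arbitrary path $p$ from $\hat{D}_1$ to $U \in \mathbf{U_2}$, and let $Y$ be the last bridge node on it, with $q$ the segment from $Y$ to $U$. Orthomodularity says $\mathbf{Y}$ and $\mathbf{U_2}$ are d-separated given $\mathbf{D_2} \cup \mathbf{Pa}_{\mathbf{D_2}}$, so $q$ is blocked by that set. I then need to transfer this to $p$ and to the smaller conditioning set $\{D_2\} \cup \mathbf{Pa}_{D_2}$.

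For the extension from $q$ to $p$ the argument is simple. A non-collider that blocks $q$ still blocks $p$. A collider on $q$ with no conditioned descendant remains such on $p$, since the conditioning set is unchanged. The only new possibility is $Y$ itself becoming a collider or non-collider on $p$, and that can only add blocking.

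The main obstacle is the change of conditioning set. Orthomodularity conditions on \emph{all} of $\mathbf{D_2} \cup \mathbf{Pa}_{\mathbf{D_2}}$, while s-reachability conditions only on $D_2$'s own family. A path could be blocked at some other $D_2' \in \mathbf{D_2}$ as a non-collider but opened when we drop it from the conditioning set. I would handle this in two ways.
\begin{itemize}
\item Argue by the relevance-graph/backward-induction ordering: process $\mathbf{D_2}$ in reverse topological order, so the decision rules of later $D_2'$ are fixed policies and can be absorbed as deterministic nodes.
\item Alternatively, read Definition~\ref{def:strategic-reliance} as reliance of the block $\mathbf{D_2}$ on $D_1$, so the conditioning set coincides with the orthomodular one.
\end{itemize}
If neither works cleanly, I would state the theorem at the level of the set $\mathbf{D_2}$: the optimal joint rule for $\mathbf{D_2}$ is invariant to $\pi_{D_1}$. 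The expected $\mathbf{U_2}$ given $\mathbf{D_2}, \mathbf{Pa}_{\mathbf{D_2}}$ does not depend on $\mathbf{Y}$, hence not on $\pi_{D_1}$. This invariance follows directly from the d-separation via soundness of d-separation for the induced SCM.

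Finally, I would conclude by contraposition: no $\hat{D}_1$ is d-connected to $\mathbf{U_2}$, so $D_1$ is not s-reachable from $D_2$, and therefore $D_2$ does not strategically rely on $D_1$.
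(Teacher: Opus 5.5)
Your skeleton matches the paper's proof. Every path from $\hat{D}_1$ to $\mathbf{U}_2$ must cross the bridge, orthomodularity blocks the final $\mathcal{M}_2$-segment, and Koller--Milch soundness (Theorem~\ref{thm:km-soundness}, not Definition~\ref{def:strategic-reliance}) turns non-s-reachability into non-reliance. You have also correctly found the step the paper's proof passes over silently: orthomodularity conditions on all of $\mathbf{D}_2 \cup \mathbf{Pa}_{\mathbf{D}_2}$, while s-reachability from a single $D_2$ conditions only on $\{D_2\} \cup \mathbf{Pa}_{D_2}$.

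\textbf{The gap.} None of your repairs closes this, and none can, because for $|\mathbf{D}_2| \geq 2$ the statement is false. Let $\mathcal{M}_1$ be $D_1 \to Y$ with $Y = D_1 \in \{0,1\}$. Let $\mathcal{M}_2$ have root $Y$, a decision $D_2'$ with $\mathbf{Pa}_{D_2'} = \{Y\}$, a decision $D_2$ with no parents, and utility $U = \mathbf{1}[D_2 = D_2']$.
\begin{itemize}
\item The only path from $Y$ to $U$ is $Y \to D_2' \to U$, blocked at $D_2'$, so the decomposition is orthomodular.
\item Hold $\pi_{D_2'}(y) = y$ fixed. Then $D_2 = 0$ is a best response when $\pi_{D_1} \equiv 0$ but not when $\pi_{D_1} \equiv 1$. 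So $D_2$ strategically relies on $D_1$.
\item This is consistent with soundness, since $\hat{D}_1 \to D_1 \to Y \to D_2' \to U$ is active given $\{D_2\}$.
\end{itemize}
This example defeats each of your repairs:
\begin{itemize}
\item \emph{Reverse-topological absorption.} Treating $D_2'$ as a fixed deterministic node leaves that path open, and reliance is already defined with the other rules held fixed.
\item \emph{Block reading.} The same two profiles show the block $\mathbf{D}_2$ relies on $D_1$: the joint rule $(\pi_{D_2'}(y) = y,\ D_2 = 0)$ is optimal under one and not the other.
\item \emph{Fallback inference.} You infer that the optimal joint rule is invariant because $E[\mathbf{U}_2 \mid \mathbf{D}_2, \mathbf{Pa}_{\mathbf{D}_2}]$ is. This fails once decisions have different information sets, since a decision that does not see some member of $\mathbf{Pa}_{\mathbf{D}_2}$ must average over a distribution that $\pi_{D_1}$ moves.
\end{itemize}
A second example shows the fallback failing concretely. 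Let $D_2'$ be uninformed, let $D_2$ observe $S = Y$, and let $U = \mathbf{1}[D_2' = S]$. The decomposition is orthomodular, because $Y \to S \to U$ is blocked at $S$. But the optimal $D_2'$ is the mode of $Y$, and no joint rule is optimal for every $\pi_{D_1}$. If $Y = D_1 \oplus \varepsilon$ with $\varepsilon$ a fair coin, the bridge value function $v_{\mathcal{M}_2} \equiv 1$ overstates the true continuation value of $1/2$.

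\textbf{What does go through.} Your argument proves the theorem essentially verbatim when $\mathbf{D}_2 = \{D_2\}$, which covers the two-period examples. It also works under a per-decision orthomodularity: for every $D_2 \in \mathbf{D}_2$, $\mathbf{Y}$ is d-separated from $\mathbf{U}_2 \cap \text{Desc}(D_2)$ given $\{D_2\} \cup \mathbf{Pa}_{D_2}$. The paper's proof makes the same unjustified move from the set-level to the per-decision conditioning set, so it too establishes only such a restricted version.

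One detail to tighten in your path-extension step. Bridge nodes typically lie in $\mathbf{Pa}_{\mathbf{D}_2}$ (e.g.\ $b_2, h_2$ in the habit example). Then the d-separation of $Y$ from $\mathbf{U}_2$ may hold only because $Y$ itself is conditioned on, so $q$ need not be blocked at an interior node. What blocks $p$ is then $Y$, which requires $Y$ to be a non-collider on $p$. This holds because the edge from $Y$ into $q$ lies in $\mathcal{E}_2$ and $Y$ is a root of $\mathcal{M}_2$, so that edge points out of $Y$. You should use this in place of your remark that the new interior node ``can only add blocking.''
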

\begin{proof}
By Definition \ref{def:composition}, all paths between $\mathbf{D_1}$ and $\mathbf{U_{D_2}}$ must include at least one bridge node $Y$, and $\mathbf{D_1} \cap \mathbf{V_2}= \emptyset$.

All paths from $\hat{\mathbf{D_1}}$, added parents of $\mathbf{D_1}$, to $\mathbf{U_{D_2}}$ will include a bridge node $Y \in \mathbf{Y}$.

By Definition \ref{def:orthomodular}, $\mathbf{Y}$ and $U_2$ are d-separated given $\mathbf{D_2} \cup  \mathbf{Pa_{D_2}}$.

Therefore there are no active paths from  $\mathbf{D_1}$ to $U_2$ given $\mathbf{D_2} \cup  \mathbf{Pa_{D_2}}$.

By Definition \ref{def:s-reachability}, $\mathbf{D_1}$ is not s-reachable from $\mathbf{D_2}$.

By Theorem \ref{thm:km-soundness}, $\mathbf{D_2}$ does not strategically rely on $\mathbf{D_1}$.
\end{proof}

We now revisit the problem of solving for the optimal policy profile of an SCIM given the tool of sequential decomposition.

\begin{definition}[Bridge value function]
Given an orthomodular decomposition $\mathcal{M}_0 = \mathcal{M}_1 \circ \mathcal{M}_2$, the \emph{bridge value function} of $\mathcal{M}_2$  is:

\begin{equation}
    v_{\mathcal{M}_2}(\mathbf{y}) = \max_{\boldsymbol{\pi}_2} E_\mathbf{Z} \left[ \sum_{U \in \mathbf{U}_2} f^*_{U}(\mathbf{y}, \mathbf{z}, \boldsymbol{\pi_2}) \right]
    \label{eq:decomposed-m2-root-value}
\end{equation}
\end{definition}

By Theorem \ref{thm:non-reliant-decomposition}, we know that in the optimal profile policy for the composed model $\mathcal{M}_0$, the optimal policy profile for $\pi_2$ does not depend on the choice of $\pi_1$. This means that we can use the bridge value function as a \emph{continuation value function} in the solution for the total model.

\begin{equation}
\begin{split}
 v_{\mathcal{M}_0}(\tilde{\mathbf{x}}) & = \max_{\boldsymbol{\pi}_0} E_{\mathbf{Z}_0} \left[ \sum_{U \in \mathbf{U_0}} f^*_{U}(\tilde{\mathbf{x}}, \mathbf{z_0}, \boldsymbol{\pi}) \right] \\
 &=  \max_{\boldsymbol{\pi}_1} E_{\mathbf{Z}_1} \left[ \sum_{U_1 \in \mathbf{U_1}} f^*_{U_1} (\tilde{\mathbf{x}}, \mathbf{z_1}, \boldsymbol{\pi_1}) + \max_{\boldsymbol{\pi}_2} E_{\mathbf{Z}_2} \left( \sum_{U_2 \in \mathbf{U}_2} f^*_{U_2}(f_\mathbf{y}(\tilde{\mathbf{x}}, \mathbf{z_1}, \boldsymbol{\pi_1}), \mathbf{z_2}, \boldsymbol{\pi_2} \right) \right]\\
 & = \max_{\boldsymbol{\pi}_1} E_{\mathbf{Z}_1} \left[ \sum_{U_1 \in \mathbf{U_1}} f^*_{U_1} (\tilde{\mathbf{x}}, \mathbf{z_1}, \boldsymbol{\pi_1}) + v_{\mathcal{M}_2}(f_\mathbf{y}(\tilde{\mathbf{x}}, \mathbf{z_1}, \boldsymbol{\pi_1}), \boldsymbol{\pi_1}) \right]
\end{split}
\label{eq:decomposed-m0-value}
\end{equation}

\subsection{Computational benefits of decomposition}
\label{sec:computational-benefit-static}

Consider the problem of finding the optimal policy profile $\mathbf{\pi}_0$ for model $\mathcal{M}_0 = \mathcal{M}_1 \circ \mathcal{M}_2$, with orthomodular decomposition.
Let $|\mathbf{\pi}_i|$ be the number of possible total policy profiles in model $\mathcal{M}_i$. $|\mathbf{\pi}_0|= |\mathbf{\pi}_1||\mathbf{\pi}_2|$.
Searching for the optimal $\mathbf{\pi}_0$ in the most naive way involves a search over $|\mathbf{\pi}_0|$ possibilities that is $O(|\mathbf{\pi}_0|)$.

Because the model is sequential, the subgame $\mathcal{M}_2$ can be solved for every case of the bridge $\mathbf{Y}$ with a worst-case search.
Once solved, the bridge value function can be solved in constant time $O(1)$ (assuming the structural equations $\mathbf{f}$ are all $O(1)$).
Thus, solving the decomposed model in sequence can be done in $O(|\pi_2||Y| + |\pi_1|)$. This is a significant improvement.

\subsection{Example: Two-period consumption with habit formation}
\label{sec:habit-formation}

Having seen how the standard two-period consumption problem decomposes sequentially, we now consider a variation that \emph{requires an expanded bridge}: habit formation. This example demonstrates how certain structural features necessitate richer sufficient statistics for decomposition.

Consider a consumer who lives for two periods and seeks to maximize lifetime utility:
$$\max_{c_1, c_2} u(c_1, h_1) + \beta u(c_2, h_2),$$
where $c_t$ denotes consumption in period $t$, $h_t$ denotes the habit stock in period $t$, $u(\cdot, \cdot)$ is a utility function that depends on both consumption and habits, and $\beta \in (0,1)$ is a time preference factor.

The term ``habit formation'' here refers not to behavioral routines but to preferences in which past consumption affects current utility. Utility is decreasing in the habit stock (that is, $\partial u / \partial h < 0$): higher past consumption raises the reference point against which current consumption is compared, making any given level of current consumption less satisfying.

The budget dynamics are identical to the standard problem:
$$a_1 = b_1 + y_1 - c_1,$$
$$b_2 = R \cdot a_1,$$
$$c_2 = b_2 + y_2.$$

However, we now add an equation describing how habits evolve. In the simplest formulation:
$$h_2 = c_1.$$
That is, the habit stock in period 2 equals consumption in period 1. We take the initial habit $h_1$ as an exogenous parameter reflecting the consumer's consumption history prior to period 1; it enters the period 1 utility function and, as we will see, becomes part of the state space for the optimization problem.

\paragraph{Why the bridge must expand} The critical difference from the standard model lies in how information flows between periods. In the standard problem, period 2 utility $u(c_2)$ depends only on $c_2$, which is determined by the bridge variable $a_1$ (equivalently $b_2$) and period 2 income $y_2$. The single bridge variable $a_1$ serves as a sufficient statistic: it captures all the information from period 1 that is relevant for period 2.

With habit formation, period 2 utility $u(c_2, h_2)$ depends on both $c_2$ and $h_2 = c_1$. The consumption choice $c_1$ now affects period 2 utility through two distinct channels: (i) indirectly through $a_1 \to b_2 \to c_2$, and (ii) directly through $h_2 = c_1$. This means that $a_1$ alone is no longer a sufficient statistic; period 2 outcomes depend on information from period 1 that is not captured by $a_1$.

The solution is to \emph{expand the bridge} to include both channels. If we take $\mathbf{Y} = \{a_1, c_1\}$ as the bridge (or equivalently $\{b_2, h_2\}$ on the period 2 side), then all connections between periods pass through the bridge. With this expanded bridge, $c_2$ becomes a genuine decision variable: the consumer in period 2 observes both their wealth $b_2$ and their habit stock $h_2$, and chooses consumption accordingly.

Under this formulation, the orthomodularity condition of Definition~\ref{def:orthomodular} is satisfied. With $c_2$ as a decision variable depending on $h_2$, the path $c_1 \to h_2 \to u_2$ is blocked: $h_2$ is now a parent of the decision $c_2$, so conditioning on $\mathbf{D}_2 \cup \mathbf{Pa}_{\mathbf{D}_2} = \{c_2, b_2, y_2, h_2\}$ includes $h_2$, which blocks the path. The expanded bridge $\{b_2, h_2\}$ and $\mathbf{U}_2$ are d-separated given the period 2 decision and its parents, restoring the backwards induction strategy of Theorem~\ref{thm:non-reliant-decomposition}.

\begin{figure}[h]
\centering
\begin{tikzpicture}[
    state/.style={circle, draw, minimum size=0.9cm, font=\small},
    decision/.style={rectangle, draw, minimum size=0.9cm, font=\small},
    utility/.style={diamond, draw, minimum size=0.7cm, font=\small, inner sep=1pt},
    noise/.style={circle, draw, double, minimum size=0.9cm, font=\small},
    discount/.style={regular polygon, regular polygon sides=5, draw, minimum size=1.1cm, font=\small, inner sep=1pt},
    >=stealth
]

\node[state] (b1) at (0, 0) {$b_1$};
\node[state] (a1) at (3.5, 0) {$a_1$};

\node[state] (y1) at (0, -2) {$y_1$};

\node[decision] (c1) at (1.5, -4) {$c_1$};
\node[utility] (u1) at (3, -5) {$u_1$};

\node[state] (h1) at (0, -5) {$h_1$};

\node[state] (b2) at (6, 0) {$b_2$};
\node[state] (h2) at (6, -5) {$h_2$};

\node[state] (R) at (4, -1) {$R$};

\node[state] (y2) at (8, -2) {$y_2$};

\node[decision] (c2) at (8.5, -4) {$c_2$};
\node[utility] (u2) at (10.5, -5) {$u_2$};

\node[discount] (beta) at (10, -3) {$\beta$};

\draw[->] (b1) -- (c1);
\draw[->] (y1) -- (c1);

\draw[->] (b1) -- (a1);
\draw[->] (y1) -- (a1);
\draw[->] (c1) -- (a1);

\draw[->] (c1) -- (u1);
\draw[->] (h1) -- (u1);

\draw[->] (a1) -- (b2);
\draw[->] (R) -- (b2);

\draw[->, dashed] (c1) -- (h2);

\draw[->] (b2) -- (c2);
\draw[->] (y2) -- (c2);
\draw[->] (h2) -- (c2);

\draw[->] (c2) -- (u2);
\draw[->] (h2) -- (u2);

\draw[->, dashed] (beta) -- (u2);

\begin{scope}[on background layer]
    \node[draw=gray, dashed, rounded corners, fit=(b1)(y1)(c1)(u1)(a1)(h1)(R)(b2)(h2), inner sep=0.4cm] (box1) {};
    \node[draw=gray, dashed, rounded corners, fit=(b2)(h2)(y2)(c2)(u2)(beta), inner sep=0.4cm] (box2) {};
\end{scope}

\node[below=0.3cm of box1.south, font=\footnotesize\itshape] {Period 1};
\node[below=0.3cm of box2.south, font=\footnotesize\itshape] {Period 2};

\end{tikzpicture}
\caption{Influence diagram for the two-period consumption problem with habit formation. The dashed arrow from $c_1$ to $h_2$ represents the habit formation mechanism: period 1 consumption directly affects period 2 habits. The expanded bridge $\{b_2, h_2\}$ connects the two periods; the boxes overlap on these nodes to indicate they belong to both components. With this expanded bridge, $c_2$ becomes a genuine decision variable (rectangle) that depends on both wealth $b_2$ and habits $h_2$. Pentagons denote discount factors.}
\label{fig:two-period-habit}
\end{figure}

\paragraph{Implications for solution methods} The expanded bridge has practical consequences for solution methods. In the standard problem, we evaluate period 2 for each possible value of the single bridge variable $a_1$, obtaining a one-dimensional continuation value function $v_2(b_2)$. With habits, the continuation value function is two-dimensional, defined over the expanded bridge:
$$v_2(b_2, h_2) = \max_{c_2} u(c_2, h_2) \quad \text{subject to } c_2 \leq b_2 + y_2.$$
The consumer in period 2 now makes a genuine choice: given wealth $b_2$ and habit stock $h_2$, they choose $c_2$ to maximize utility. This continuation value function can be computed independently of period 1, and then used in the period 1 problem:
\begin{equation}
v_1(b_1, h_1) = \max_{c_1} \left\{ u(c_1, h_1) + \beta \, v_2\bigl(R(b_1 + y_1 - c_1), c_1\bigr) \right\}.
\label{eq:habit-bellman}
\end{equation}
The first argument of $v_2$ is the wealth channel (via $a_1 \to b_2$); the second is the habit channel (via $c_1 \to h_2$). The problem decomposes sequentially, but over a higher-dimensional bridge than the standard model.

\paragraph{General lessons} This example illustrates a broader point about model structure and decomposability. The standard consumption problem decomposes with a single bridge variable because asset position is a sufficient statistic for all period 1 information relevant to period 2. Habit formation requires expanding the bridge: past consumption matters for future utility beyond its effect on wealth, so both $a_1$ and $c_1$ (or equivalently $b_2$ and $h_2$) must be included. More generally, when past decisions affect future outcomes through multiple channels, the bridge must expand to capture all relevant sufficient statistics. The framework accommodates such cases: decomposition does not fail, but the dimensionality of the bridge increases.

\section{Structural Causal Decision Processes (SCDP)}
\label{sec:scdp}

We now build on the prior definitions of SCDMs and composition to develop a new form of composable model for a dynamic decision process, an SCDP.
Like the SCDMs, a decomposable SCDP can be solved more efficiently through separation into subgames.

\subsection{Example: Consumer choice with portfolio allocation}
\label{sec:consumer-with-portfolio}

We consider two components of a dynamic stochastic optimization problem.
A consumer earns and consumes, anticipating the availability of resources for future consumption.
They also make a portfolio allocation decision which exposes their savings to a more or less risky rate of return.
This model is diagramed in Figure \ref{fig:consumer-portfolio}.

\paragraph{Consumption-saving problem}  Consider a simple consumption-saving problem with borrowing constraint \citep{maliar2021deep}. The consumer aims to maximize, through choice of their level of consumption $c_t$:
$$\max_{\{c_t\}} E \left[ \sum_{t=0}^{\infty} \beta^t u (c_t) \right]$$

subject to the following transition equations and constraints:

\begin{align}
    y &= \mu_y + \sigma_y \epsilon_y ;\ \epsilon_y \sim \mathcal{N}(0,1)\\
    m &= w + e^{y}\\
    0 \leq c(m) &\leq m\\
    u &= \text{ln } c\\
    a &= m - c\\
    w' &= ra
\end{align}

where $w$ is the level of wealth, $r$ is a rate of return on savings, and $y$ is an income shock, and $u$ is utility from consumption.

\paragraph{Portfolio allocation} We can make the consumption-saving problem more complex by introducing a risky return and portfolio allocation choice $\alpha$.
Whereas in the earlier problem the rate of return $r$ was given, now it is an endogenously scaled mixture of a constant $r_{f}$ and a stochastic 
shock $r_{r}$. The consumer chooses $\alpha$, the share of their wealth to invest in the risky asset.

\begin{align}
0 &\leq \alpha(a) \leq 1\\
r_{r} &= e^{\mu_r + \sigma_r \epsilon_r} ;\ \epsilon_r \sim \mathcal{N}(0,1)\\
r &= (1 - \alpha) r_{f} + \alpha r_{r}
\end{align}

A critical piece of the interpretation of these equations is that the portfolio allocation choice $\alpha$ is made with knowledge of the post-consumption wealth $a$ but without the information of the realization of $r_{r}$.

\paragraph{Composed model}. The composed model is shown in Figure \ref{fig:consumer-portfolio}.
The variables $a$ and $r$ are shared by both models and thus are the \emph{bridge}.
Because the allocation component has no utility nodes, the composition is trivially orthomodular.
The single period problem can be solved first by solving for $\alpha$ given the continuation value of $w'$, and then by solving for optimal $c$.

\begin{figure}[h]
\centering
\begin{tikzpicture}[
    node distance=1.4cm and 1.8cm,
    state/.style={circle, draw, minimum size=0.9cm, font=\small},
    decision/.style={rectangle, draw, minimum size=0.9cm, font=\small},
    utility/.style={diamond, draw, minimum size=0.7cm, font=\small, inner sep=1pt},
    bridge/.style={circle, draw, very thick, minimum size=0.9cm, font=\small},
    noise/.style={circle, draw, double, minimum size=0.9cm, font=\small},
    parameter/.style={draw=none, font=\small},
    >=stealth
]

\node[state] (w) at (-2,0) {$w$};
\node[noise] (eps) at (-1.25,1) {$\epsilon_y$};
\node[state] (y) at (0,1) {$y$};
\node[state] (m) at (1,0) {$m$};

\node[decision] (c) at (2,-1) {$c$};
\node[utility] (u) at (3,-1) {$u$};
\node[state] (a) at (3,1.5) {$a$};

\node[decision] (alpha) at (4,3) {$\alpha$};
\node[noise] (eps_2) at (4.75, 4) {$\epsilon_r$};
\node[state] (r_r) at (6,4) {$r_r$};
\node[state] (r) at (7,1.5) {$r$};

\node[state] (w') at (8,0) {$w'$};

\draw[->] (eps) -- (y);
\draw[->] (y) -- (m);
\draw[->] (w) -- (m);
\draw[->] (m) -- (c);
\draw[->] (c) -- (u);
\draw[->] (m) -- (a);
\draw[->] (c) -- (a);
\draw[->] (a) -- (w');
\draw[->] (r) -- (w');

\draw[->] (a) -- (alpha);
\draw[->] (alpha) -- (r);
\draw[->] (r_r) -- (r);
\draw[->] (eps_2) -- (r_r);

\begin{scope}[on background layer]
    \node[draw=gray, dashed, rounded corners, fit=(w)(eps)(u)(m)(a)(y)(c)(w'), inner sep=0.4cm] (box1) {};
    \node[draw=gray, dashed, rounded corners, fit=(a)(alpha)(r_r)(eps_2)(r), inner sep=0.4cm] (box2) {};
\end{scope}

\node[below=0.3cm of box1.south, font=\footnotesize\itshape] {Consumption};
\node[above=0.3cm of box2.north, font=\footnotesize\itshape] {Allocation};

\end{tikzpicture}
\caption{The composed consumption and portfolio allocation dynamic problem.}
\label{fig:consumer-portfolio}
\end{figure}

\subsection{Dynamic Optimization}

We begin with a simple definition of a dynamic stochastic optimization problem, drawn from \cite{maliar2021deep}.

\begin{definition}[Dynamic stochastic optimization problem (DSOP)] Given discrete time steps $t \in {0, 1, ..., T}$ with $T$ potentially infinite, and given exogenous shocks $z = \{z_t \sim P_{Z_t}\}^T_{t=0}$ and endogenous state $x = \{x_t\}^T_{t=0}$ and actions $a = \{a_t\}^T_{t=0}$ such that:
\begin{itemize}
    \item $a_t \in A_t(x_t, z_t)$
    \item $x_{t+1} = g_t(x_t, z_t, a_t)$
\end{itemize}
    the agent maximizes expected lifetime value
    \begin{equation}
v_0 = \max_{a,x} E_{z} \left[\sum_{t=0}^{T-1}\beta^t r(x_t, z_t, a_t) \right]
\label{eq:lifetime-value}
\end{equation}
where $\beta \in [0,1)$ is the discount factor.
\label{def:dsop}
\end{definition}

As is well-known, this sort of optimization problem can be represented by a recursive Bellman equation.
We will use a somewhat adjusted form.

\begin{definition}[Shifted Bellman Equations]
The Shifted Bellman Equation is for $v_t(x_t)$. We let $v_{T}(x_t) = 0$. For $t < T$,
\begin{equation}
    v_t(x_t) = E_{z_t} \left[ \max_{a_t \in A_t(x_t,z_t)} \{ r_t(x_t,z_t,a_t) + \beta_t v_{t+1}(x_{t+1}) \} \right]
    \label{eq:shifted-bellman-equation}
\end{equation}
where $x_{t+1} = g_t(x_t,z_t,a_t)$.
\end{definition}

The benefit of Equation \ref{eq:shifted-bellman-equation} is that it formulates the ``solution'' of a single time period $t$ in terms of a small number of self-contained elements.
We will give this bundle of elements a name -- a period block, or P-block.

\begin{definition}[P-block]
A P-block is a tuple $\mathcal{B}^P_t = (\mathcal{X}, \mathcal{Z}, P_\mathcal{Z}, \mathcal{A}, A, g, r, \beta_t)$:
\begin{itemize}
    \item A state space $\mathcal{X}$.
    \item A space of exogenous shocks $\mathcal{Z}$.
    \item A probability distribution over shocks $P_\mathcal{Z}$.
    \item A space of actions $\mathcal{A}$
    \item A set of action constraints $A: \mathcal{X} \times \mathcal{Z} \rightarrow \mathbb{P}(\mathcal{A})$
    \item A deterministic transition function $g : \mathcal{X} \times \mathcal{Z} \times \mathcal{A} \rightarrow \mathcal{X}$
    \item A reward function $r:  \mathcal{X} \times \mathcal{Z} \times \mathcal{A} \rightarrow \mathbb{R}$
    \item A discount function $\beta: \mathcal{X} \times \mathcal{Z} \times \mathcal{A} \rightarrow \mathbb{R}$, which is most often valued as a constant.
\end{itemize}
\label{def:p-block}
\end{definition}

We have allowed a functional or variable discount factor here.
We will explore what the implications of that are in Section \ref{sec:scdp-variable-discount}.
For purposes of presentation here, we will continue to show $\beta$ as a constant.


\begin{lemma} Given, for $t = \{0, 1, ..., T - 1\}$, P-Block $\mathcal{B}^P_t$ 
then the optimization problem in shifted Bellman* form
\begin{equation}
    v_t(x_t) = E_{z_t} \left[ \max_{a_t \in A_t(x_t,z_t)} \{ r_t(x_t,z_t,a_t) + \beta_t v_t(g(x_t, z_t, a_t)) \} \right]
    \label{eq:t_block-bellman-equation}
\end{equation} is a DSOP.
\end{lemma}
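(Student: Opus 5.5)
We prove the lemma by constructing the DSOP of Definition~\ref{def:dsop} directly from the sequence of P-blocks, and then checking that its value coincides with the one produced by the recursion \eqref{eq:t_block-bellman-equation}. We read $v_t(g(\cdot))$ on the right-hand side of \eqref{eq:t_block-bellman-equation} as the continuation $v_{t+1}$, with terminal condition $v_T \equiv 0$, exactly as in the Shifted Bellman Equation \eqref{eq:shifted-bellman-equation}.

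\textbf{Step 1 (construction).} For each $t \in \{0,\dots,T-1\}$, read off the DSOP primitives from $\mathcal{B}^P_t = (\mathcal{X}, \mathcal{Z}, P_\mathcal{Z}, \mathcal{A}, A, g, r, \beta_t)$.
\begin{itemize}
\item The shocks are $z_t \sim P_{Z_t} := P_\mathcal{Z}$.
\item The feasible action sets are $A_t(x_t,z_t) := A(x_t,z_t) \subseteq \mathcal{A}$.
\item The transitions are $g_t := g$.
\item The per-period rewards are $r_t := r$.
\end{itemize}
Following the convention adopted just after Definition~\ref{def:p-block}, we take $\beta_t \equiv \beta$ constant in $[0,1)$. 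Every DSOP ingredient is then specified. If the state spaces of consecutive blocks differ, we require that the codomain of $g$ in block $t$ be the state space of block $t+1$; this compatibility is implicit in the hypothesis.

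\textbf{Step 2 (backward induction).} We show that the recursively defined $v_0$ equals the DSOP value \eqref{eq:lifetime-value}. The claim for each $k$ is that $v_{T-k}(x)$ equals the maximal expected discounted reward $E\left[\sum_{s=T-k}^{T-1}\beta^{s-(T-k)} r(x_s,z_s,a_s)\right]$ starting from $x_{T-k}=x$.
\begin{itemize}
\item \emph{Base case} ($k=0$): the sum is empty and $v_T = 0$.
\item \emph{Inductive step}: assume the claim at $t+1$. Expand the expectation of the tail sum starting at $t$, using independence of $z_t$ from the later shocks and the tower property. Then exchange maximization over the policy from time $t+1$ onward with conditioning on $(x_t,z_t,a_t)$; this is the principle of optimality. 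The result is $E_{z_t}\left[\max_{a_t}\{r + \beta v_{t+1}(g(x_t,z_t,a_t))\}\right]$.
\end{itemize}
Since the action is chosen after observing $z_t$, the $\max$ sits inside $E_{z_t}$. This matches the DSOP convention that $a_t \in A_t(x_t,z_t)$, that is, that actions are adapted to the shocks. For $T=\infty$ we instead use the stationary fixed point and appeal to contraction of the Bellman operator, which holds because $\beta<1$ and rewards are bounded.

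\textbf{Main obstacle.} The delicate step is the interchange of $\max$ and $E$ in the inductive step. This requires measurable selection of maximizers, or at least $\epsilon$-optimal selectors, so that pointwise-optimal actions assemble into a policy $a_t = \pi_t(x_t,z_t)$. Since the paper explicitly declines a measure-theoretic grounding, we would state this as a standing assumption: $A$ is compact-valued and $r$ and $v_{t+1}\circ g$ are upper semicontinuous, so that maxima exist and a measurable argmax exists. Under that assumption the lemma reduces to standard finite-horizon dynamic programming.
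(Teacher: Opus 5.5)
Your proposal is correct and takes the same route as the paper. The paper's one-line proof identifies the P-block recursion (with the right-hand $v_t$ read as $v_{t+1}$, as you do) with the Shifted Bellman Equation, and then appeals to the well-known equivalence between that recursion and the DSOP. You go further by carrying out the backward-induction argument for that equivalence and by stating the assumptions the paper leaves implicit: constant $\beta \in [0,1)$, shocks independent across periods, compatible state spaces, measurable selection, and bounded rewards with stationary blocks when $T=\infty$.
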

\begin{proof}
Equation \ref{eq:t_block-bellman-equation} implies Equation \ref{eq:shifted-bellman-equation}, which is an alternative form for a DSOP.
\end{proof}

\subsubsection{Equivalence of SCIMs and P-blocks}

We have defined two mathematical objects, a P-block (\ref{def:p-block}) which represents a single 'period' of a DSOP (\ref{def:dsop}), and an SCDM (\ref{def:scdm}). While these two constructs come from different intellectual fields, it is simple to produce a P-block from an SCDM.

\begin{theorem} Given
\begin{itemize}
    \item an SCDM $\mathcal{M} = (\mathbf{V}, \mathbf{Z}, \mathcal{E}, (\mathbf{X}, \mathbf{D}, \mathbf{U}), \mathbf{Pr}, \mathbf{f}, \boldsymbol{\theta})$
    \item a discount variable $\beta \in \mathbf{V}$
    \item a mapping from root variables to end-of-period-variables
    $w: \mathbf{\tilde{X}} \rightarrow \mathbf{V}$
    such that
    $\forall \tilde{X} \in \mathbf{\tilde{X}}, \text{dom}(w(\tilde{X})) = \text{dom}(\tilde{X}))$ and 
    $\mathbf{W} = \{w(\tilde{X}) | \tilde{X} \in \mathbf{\tilde{X}} \}$
\end{itemize} there is a corresponding P-block $\mathcal{B}^P_t = (\mathcal{X}, \mathcal{Z}, P_\mathcal{Z}, \mathcal{A}, A, g, r, \beta_t)$. 
\end{theorem}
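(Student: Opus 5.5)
The proof is a direct construction: read off each component of the P-block from the SCDM, then check that each constructed map is well defined and has the type required by Definition \ref{def:p-block}. I would set the state space to $\mathcal{X} = \text{dom}(\tilde{\mathbf{X}}) = \prod_{\tilde{X}\in\tilde{\mathbf{X}}}\text{dom}(\tilde{X})$, the shock space to $\mathcal{Z} = \text{dom}(\mathbf{Z})$ with $P_\mathcal{Z} = \mathbf{Pr}(\boldsymbol{\theta}_E)$, and the action space to $\mathcal{A} = \text{dom}(\mathbf{D})$. Following the remark after the decision-rule definition, an action $a\in\mathcal{A}$ is a joint assignment of values to the decision variables. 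The induced SCM then gives, for each $V$, a deterministic function $f^*_V(\tilde{\mathbf{x}},\mathbf{z},\mathbf{D}=\mathbf{d})$. This is the same object used in Equation \ref{eq:scim-optimal-action}: it is obtained by topologically sorting $\mathcal{G}$ and evaluating the structural equations $\mathbf{f}$, with decision variables clamped to $\mathbf{d}$.

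With $f^*$ in hand, the remaining components are:
\begin{itemize}
\item the reward $r(x,z,a) = \sum_{U\in\mathbf{U}} f^*_U(x,z,a)$;
\item the discount $\beta_t(x,z,a) = f^*_\beta(x,z,a)$;
\item the transition $g(x,z,a) = \bigl(f^*_{w(\tilde{X})}(x,z,a)\bigr)_{\tilde{X}\in\tilde{\mathbf{X}}}$, whose codomain is exactly $\mathcal{X}$ by the hypothesis $\text{dom}(w(\tilde{X})) = \text{dom}(\tilde{X})$;
\item the constraint $A(x,z) = \{\mathbf{d} : d_D \in \Gamma_D(\mathbf{pa}_D) \ \forall D\in\mathbf{D}\}$, where $\mathbf{pa}_D$ is computed by $f^*$ from $(x,z,\mathbf{d})$.
\end{itemize}
If $\beta$ is a root or parameter, then $\beta_t$ is constant, which matches the usual case.

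The main obstacle is the constraint map $A$ together with the information structure. The P-block lets the action depend on the full $(x_t,z_t)$, which is the shifted Bellman form of Equation \ref{eq:shifted-bellman-equation} with $\max$ inside $E_{z_t}$. The SCDM, by contrast, restricts each $D$ to depend only on $\mathbf{Pa}_D$, and $\Gamma_D$ may depend on parents that are themselves downstream of other decisions. I would handle this in two steps. First, define $A$ jointly as above, noting that it is well defined because the DAG ordering makes the constraint check on $\mathbf{d}$ a finite sequential evaluation. Second, state explicitly that the correspondence claimed is at the level of the tuple's components: a P-block exists with these $\mathcal{X},\mathcal{Z},\mathcal{A},A,g,r,\beta_t$. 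The value equivalence with $v_\mathcal{M}$ would hold only when each decision observes all shocks realized before it (a sufficient-recall-type condition), so I would flag that equivalence as a remark rather than part of the theorem.

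Finally, I would check that each constructed map has the signature required by Definition \ref{def:p-block}. That $r$ is real-valued follows from the convention that utility variables range over $\mathbb{R}$.
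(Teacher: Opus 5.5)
Your construction is correct and is essentially the paper's own proof, which likewise takes $\mathcal{X}$ from the roots and sets $\mathcal{Z}=\mathbf{Z}$, $P_\mathcal{Z}=\mathbf{Pr}$, $\mathcal{A}=\mathbf{D}$, $A=\Gamma$, $g=\mathbf{f}_{\mathbf{W}}$, $r=\sum_{U\in\mathbf{U}} f^*_U$ and $\beta=f^*_\beta$; the difference is that you are more careful about types, in particular obtaining $A(x,z)$ by composing each $\Gamma_D$ with $f^*$, whereas the paper's $A=\Gamma$ ignores that $\Gamma_D$ takes $\mathbf{pa}_D$ rather than $(x,z)$ as input. Your information-structure caveat goes beyond the paper and is rightly kept out of the theorem, but because the $\max$ sits inside $E_{z_t}$ the P-block agent also sees shocks causally downstream of a decision (e.g.\ $\epsilon_r$ before choosing $\alpha$ in the portfolio example), so the mismatch is more than a recall condition.
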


\begin{proof}
    By construction.
    
    Let $\mathcal{X} \subseteq \tilde{\mathbf{X}}$, the roots.
    
    Let $\mathcal{Z} = \mathbf{Z}$.

    Let $P_\mathcal{Z} = Pr(\mathbf{Z}, \mathbf{\theta})$

    Let $\mathcal{A} = \mathbf{D}$.


    Let $A = \Gamma$.

    Let $g(x, z, a) = \mathbf{f}_{\mathbf{W}}(x, z, a)$, the structural functions evaluated at the end-of-state variables $\mathbf{W}$.

    Let $r(x, z, a) = \sum_{U \in \mathbf{U}} f^*_U(x, z, a)$, the sum of utility functions.

    Let $\beta = f^*_\beta$.
\end{proof}

Thus, an SCIM can be made 'dynamic' by adding a discount factor and by making an explicit mapping from the available model variables to the state variables for the next period.
We now have a method of creating dynamic stochastic optimization problems from an SCDM. We will call these \emph{Structural Causal Decision Processes} (SCDP).

\begin{definition}[Structural Causal Decision Process] A \emph{structural causal decision process} $(\mathcal{M}, \beta, w)$ is a SCIM $\mathcal{M} = (\mathbf{V}, \mathbf{Z}, \mathcal{E}, (\mathbf{X}, \mathbf{D}, \mathbf{U}), \mathbf{Pr}, \mathbf{f}, \boldsymbol{\theta})$, a discount variable $\beta \in \mathbf{V}$, and an end-state mapping $w: \mathbf{\tilde{X}} \rightarrow \mathbf{V}$, composed, via P-block, into a DSOP.
\label{def:scdp}
\end{definition}


\subsubsection{Dynamic Models and Sequential Decomposition}
\label{sec:composition}

We have shown how sequential decomposition reduces the complexity of solving static models in Section \ref{sec:computational-benefit-static}.
Here, we will show that these same computational advantages carry through to the dynamic case.

Consider a P-block $\mathcal{B}^P$ constructed from an SCDM which sequentially decomposes $\mathcal{M}_0 = \mathcal{M}_1 \circ \mathcal{M}_2$, and an algorithm that attempts to derive the value function by iteratively recomputing the value function over all states $\mathcal{X}$, taking expectation over all shocks $\mathcal{Z}$, and optimizing over all actions $\mathcal{A}$. Naively, this results in an $O(|\mathcal{X}||\mathcal{Z}||\mathcal{A}|)$ step for each iteration.

However, because the SCDM is sequentially decomposable, it is possible to solve the optimization problem embedded in the update more efficiently. Whereas $|\mathcal{A}| = |\boldsymbol{\pi}_0|$, we have seen that the second subgame can be solved separately from the first.
The complexity of the optimization is $O(|\boldsymbol{\pi}_2||\mathbf{Y}| + |\boldsymbol{\pi}_1|)$, resulting in a total complexity for an iteration of $O(|\mathbf{\tilde{X}}||\mathcal{Z}|(|\boldsymbol{\pi}_2||\mathbf{Y}| + |\boldsymbol{\pi}_1|))$, a significantly tighter bound.

\section{SCDPs beyond POMDPs toward resource rationality}
\label{sec:scdp-pomdp-rr}

We will discuss the relationship between an SCDP and more commonly known formalisms, MDP and POMDP.
While an SCDP has many elements in common with MDPs and POMDPs, it is not exactly the same thing as these.
The class of SCDPs is not the same as the class of MDPs because in an SCDP, it is possible that the agent will have limited information when making their decisions.
The class of SCDPs is not the same as the class of POMDPs because it is not guaranteed that the agent in the SCDP has memory of sufficient statistics with which to solve for optimal behavior.
Rather, the SCDP framework is flexible enough to model the memory capacity of agents directly.

\subsection{Example: latent income process}
\label{sec:scdp-example}

Figure \ref{fig:example-scdp} depicts a valid SCDP, governed by the following equations:
\begin{align*}
    \epsilon_a &\sim Normal(0,1)\\
    a' &= a + e^{\epsilon_a}\\
    c &= b + a\\
    0 &< d(c) < c\\
    \epsilon_b &\sim Normal(0,1)\\
    b' &= c - d + {e^\epsilon_b}\\
    u &= \text{log}\ d
\end{align*}

This is similar to the consumer problems we have discussed before, except now the agent is subject to an unobserved exogenous income process $(a, \epsilon, a')$.
Each period, they experience both a transitory and a permanent (evolving over time) income shock, but they do not know whether the income is from the transitory or permanent source.
If the agent were able to observe $b$ at $d$, their decision rule would in effect have full information of $a$, but this is not the case.

Given these equations, the agent has no past memory with which to formulate beliefs about their unobserved environment. Rather, they can infer the latent value of $a$ only through their observation $c$ directly. Any inferences about the environment will be implicit in the decision rule $\pi_d$.

However, if we add variables and connections, we can construct the capacity for the agent to have a Bayesian filter with which to retain past information and make better decisions.

\begin{align*}
    m'(m, c) &\in \{0,1\}\\
    0 < d(c, m') &< c
\end{align*}

In this extended model, the agent chooses two decision rules. $\pi_d$ governs the consumption decision. $\pi_{m'}$ governs the retention of information from observations $c$. We would expect that the agent equipped with this additional memory unit would be able to capture greater lifetime reward. But for the sake of exposition, we have restricted the 'belief states' of this agent to only two states, $0$ or $1$.

\begin{figure}
\centering
\begin{tikzpicture}[
    state/.style={circle, draw, minimum size=0.9cm, font=\small},
    decision/.style={rectangle, draw, minimum size=0.9cm, font=\small},
    utility/.style={diamond, draw, minimum size=0.7cm, font=\small, inner sep=1pt},
    bridge/.style={circle, draw, very thick, minimum size=0.9cm, font=\small},
    noise/.style={circle, draw, double, minimum size=0.9cm, font=\small},
    >=stealth
]

\node[state] (a) at (0,0) {$a$};
\node[state] (a') at (6,0) {$a'$};
\node[noise] (eps_a) at (4,1) {$\epsilon_a$};
\node[state] (b) at (0, -2) {$b$};
\node[state] (c) at (2, -2) {$c$};
\node[decision] (d) at (4, -3) {$d$};
\node[state] (b') at (6, -2) {$b'$};
\node[noise] (eps_b) at (4,-1) {$\epsilon_b$};
\node[utility] (u) at (6, -4) {$u$};
\node[state] (m) at (0, -5) {$m$};
\node[decision] (m') at (3, -5) {$m'$};

\draw[->] (a) -> (a');
\draw[->] (eps_a) -> (a');
\draw[->] (b) -> (c);
\draw[->] (a) -> (c);
\draw[->] (c) -> (d);
\draw[->] (d) -> (b');
\draw[->] (c) -> (b');
\draw[->] (eps_b) -> (b');
\draw[->] (d) -> (u);

\draw[->, dashed] (m) -> (m');
\draw[->, dashed] (c) -> (m');
\draw[->, dashed] (m') -> (d);

\end{tikzpicture}
\caption{An SCDP with latent state and shocks. The agent observes $c$ at decision $d$, but not the exogenous income process $(a, \epsilon_a, a', \epsilon_b)$. Double circles denote exogenous noise variables. An SCDP need not have a Bayesian filter, which is typically considered part of a POMDP, but it is possible to build one in. The variables $m, m'$ and connecting dashed edges represent that (optional) filter.}
\label{fig:example-scdp}
\end{figure}

\subsection{SCDP and MDP}

An MDP is a tuple $(S, A, P_a, R_a)$, where $S$ are the states, $A_s$ are the actions available at each state $s \in S$, $P_a(s, s')$ are the transition probabilities, and $R(s,a)$ is the immediate reward after having taken action $a$ in state $s$. There are many variations on this formalism. This is very similar to the definition of the DSOP in Definition \ref{def:dsop}, in that it sets up all the elements needed for the dynamic optimization problem, but for the discount factor $\beta$.

However, canonically, the policy function in an MDP is $\pi(s)$, a function of state values.
This implies that the agent has full information about the state variables when they make their decision -- it is a 'fully observed' model.

In contrast, in an SCDP, decision rules $\pi_D$ are functions of the $\mathbf{Pa}_D$, which may not include all the variables in $\mathbf{\tilde{X}}$.
SCDPs are expressive enough to allow for partial-observability or latent variables in a model.
That includes the case in which decisions are made sequentially but with exogenous shocks interleaved in between them.

\subsection{SCDP and POMDP}

POMDPs are an extension to MDPs that separate states from observations \cite{krishnamurthy2016partially}.
It is a tuple $(S, A, \Omega, T, O, R, \beta)$, where $S$ are states, $A$ are actions, $\Omega$ are a set of possible observations, $T(s' | s, a)$ is the transition probability distribution, $O(o|s', a)$ is the observation probability distribution, $R(s,a)$ is the reward function, and $\beta$ is the discount factor.

POMDPs are typically analyzed as implying an internal belief state MDP $(B, A, \tau, r, \beta)$, which is a kind of subjective mirror to the real environment described by the POMDP. A belief transition function $\tau(b'|a,b)$ operates as a Bayesian filter, aggregating information from prior observations and storing what is ideally a sufficient statistic in the belief state $b$, and $r(a,b)$ computes the expectations of reward $R$ given belief in $b$.
The agent 'solves' the POMDP by constructing and solving the belief state MDP.

This formation of the belief state MDP implies that there is an information flow from past beliefs to the information set of current decisions which is \emph{not required} for an SCDP. The SCDP makes explicit which variables are in the information sets for its decision variable(s).
For example, in the first version of the model we introduce in Section \ref{sec:scdp-example}, the agent does not remember prior information.
They are limited to their immediate observation $c$ but have no way of tracking the latent state beyond that signal.
In the second version, the filter $m$ serves as the memory for the agent, subject to the agents' decision rule $\pi_m$.

So, SCDPs are a more general form of model than POMDPs.
There are other ways in which SCDPs are more expressive than POMDPs.
For example, a single SCDP can have multiple decision variables, each with a different information set, with different linking 'memory' states.

\subsection{Expressing resource rational and other realistic agents}

Our contention is that this additional expressiveness that SCDPs have, beyond the POMDP, is useful when considering how to model realistic settings with agents.
This is because the rationality assumptions implied by the POMDP model are not realistic for many social scientifically valid agents that act with limited information, irrational decision rules, and operate in complex environments that are composed out of many contexts.
(See the discussion of resource rationality in Section \ref{sec:resource-rationality}.)

Minor variations of the example model can expand the memory available to the agent ($m \in \mathbb{R})$, change the way that memory works (by making $m'(m,x)$ a fixed function rather than a flexible decision rule), or by composing it with additional states (as in the portfolio allocation module in Section \ref{sec:consumer-with-portfolio}).
We do not guarantee that all environments developed in this way are efficiently soluble. Rather, we are concerned with how to flexibly describe agent environments based on their realistic informational and computational constraints.

\subsection{Example: costly memory}

We present one more example that illustrates how SCDPs can represent the design of artificial agents in a way that exceeds the representational power of POMDPs.
It varies the model in Section \ref{sec:scdp-example} only slightly by allowing the agent to make a decision about \emph{how well they remember} past information.
This is a model of an agent which can retain data about past observations, but at a cost. This reflects the contemporary situation of firms which may have to pay for cloud storage for their data.

We retain the equations of the consumer with a latent income process.
\begin{align*}
    \epsilon_a &\sim Normal(0,1)\\
    a' &= a + e^{\epsilon_a}\\
    c &= b + a\\
    0 < d(c, q) &< c\\
    \epsilon_b &\sim Normal(0,1)\\
    b' &= c - d + {e^\epsilon_b}\\
    u &= \text{log}\ d
\end{align*}

However, we alter the memory process $(p,q)$ so that it is noisy $\epsilon_p$, with the noise modulated by a tunable rate of remembering $r$ which introduces a cost $v$.
\begin{align*}
    q(p, c) &\in \mathbb{R}\\
    r &> 0\\
    v &= - 1/r\\
    \epsilon_p &\sim Normal(0,1)\\
    p' &= q + \epsilon_p r
\end{align*}

In this model, the agent now has three decision variables, each with a different information set.
\begin{itemize}
    \item At $q$, the agent updates their estimate of the state of the latent income process based on the new information at $c$.
    \item At $d$, the agent makes a consumption decision, expending scarce resources in order to experience reward at $u$.
    \item At $r$, the agent makes a global decision about how much information they will retain from period to period, as $r$ limits the drift of memory caused by $\epsilon_p$. The agent can choose to have more accurate memory, but at the cost of utility at $v$.
\end{itemize}

This is well-formulated as an SCDP. It is not possible to model this problem as a POMDP. Unlike in the POMDP, we have endogenized the agent's memory and belief formation into the model, making these subject to strategies and cognitive costs.

\begin{figure}
\centering
\begin{tikzpicture}[
    state/.style={circle, draw, minimum size=0.9cm, font=\small},
    decision/.style={rectangle, draw, minimum size=0.9cm, font=\small},
    utility/.style={diamond, draw, minimum size=0.7cm, font=\small, inner sep=1pt},
    bridge/.style={circle, draw, very thick, minimum size=0.9cm, font=\small},
    noise/.style={circle, draw, double, minimum size=0.9cm, font=\small},
    >=stealth
]

\node[state] (a) at (0,0) {$a$};
\node[state] (a') at (6,0) {$a'$};
\node[noise] (eps_a) at (4,1) {$\epsilon_a$};
\node[state] (b) at (0, -2) {$b$};
\node[state] (c) at (1.5, -2) {$c$};
\node[decision] (d) at (3.5, -3) {$d$};
\node[state] (b') at (6, -2) {$b'$};
\node[noise] (eps_b) at (4,-1) {$\epsilon_b$};
\node[utility] (u) at (6, -3) {$u$};

\node[state] (p) at (0, -5) {$p$};
\node[decision] (q) at (2.5, -5) {$q$};
\node[decision] (r) at (4.5, -4) {$r$};
\node[utility] (v) at (6, -4) {$v$};
\node[noise] (eps_p) at (4,-6) {$\epsilon_p$};
\node[state] (p') at (6, -5) {$p'$};

\draw[->] (a) -> (a');
\draw[->] (eps_a) -> (a');
\draw[->] (b) -> (c);
\draw[->] (a) -> (c);
\draw[->] (c) -> (d);
\draw[->] (d) -> (b');
\draw[->] (c) -> (b');
\draw[->] (eps_b) -> (b');
\draw[->] (d) -> (u);

\draw[->] (p) -> (q);
\draw[->] (c) -> (q);
\draw[->] (q) -> (d);
\draw[->] (r) -> (v);
\draw[->] (r) -> (p');
\draw[->] (eps_p) -> (p');
\draw[->] (q) -> (p');

\end{tikzpicture}
\caption{An SCDP in which the agent consumes at $d$, updates their beliefs about the world at $q$, and chooses how much to remember at $r$. The agent experiences joy with consumption $u$ and consternation at remembering carefully $v$. Their resources vary over time due to transitory shocks $\epsilon_b$ as well as a latent stochastic income process which is not directly observed.}
\label{fig:example-scdp-costly-memory}
\end{figure}

\section{Variable discount factors in SCDPs}
\label{sec:scdp-variable-discount}

In Definition \ref{def:scdp}, we discussed that the discount factor for an SCDP is a choice of variable $\beta \in \mathbf{V}$.
In most presentations of dynamic programming problems, the discount factor $\beta$ (or $\gamma$) is a constant; this is reflected in Definition \ref{def:dsop}.
However, in economics there is a widespread use of dynamic, state dependent, and non-exponential discounting, which we discussed in Section \ref{sec:lit-review-variable-discounting}.

Variable discounting poses no fundamental obstacle for the SCDM representation. Because the discount factor is a selected state variable $\beta \in \mathbf{V}$, it may have its own structural equation governing its evolution. (It may also be assigned a constant.) For example, in the quasi-hyperbolic case, $\beta$ takes different values depending on the temporal distance of the decision. In the state-dependent case of \citet{stachurski2021dynamic}, $\beta = \beta(x)$ for some function of the state. The graphical structure of the SCDM then makes explicit how discounting interacts with other state variables, potentially revealing new decomposition opportunities or, conversely, identifying when variable discounting breaks orthomodularity.

\subsection{Example: Stochastic discount factors}
\label{sec:stochastic-discount}

We now present an example with stochastic discount factors following \citet{krusell1998income}. In this formulation, the discount factor $\beta$ is a separate state variable that follows its own Markov process, independent of the income shock, allowing patience to vary across agents and over time.

Consider a consumption-saving problem where the discount factor $\beta \in \{\underline{\beta}, \bar{\beta}\}$ transitions according to a Markov chain with transition matrix $\Pi_\beta$. The persistence of $\beta$ is calibrated so that the average duration in each state corresponds to an agent's lifetime. The agent maximizes:
$$\max_{\{c_t\}} E \left[ \sum_{t=0}^{\infty} \left(\prod_{s=0}^{t-1} \beta_s\right) u(c_t) \right]$$

subject to the following transition equations and constraints:

\begin{align}
    y' &= \rho y + \epsilon_{y} ;\ \epsilon_{y} \sim \mathcal{N}(0,\sigma^2)\\
    \beta' &= g_\beta(\beta, \epsilon_{\beta}) ;\ \epsilon_{\beta} \sim F_\beta\\
    0 \leq c(a,y, \beta) &\leq a + y\\
    u &= u(c)\\
    a' &= a + y - c
\end{align}

where $a$ is the level of assets, $y$ is income, $\beta$ is the discount factor, $u$ is utility from consumption, and $g_\beta$ maps the current discount factor and shock to the next period's value according to the Markov transition $\Pi_\beta$. The income shock follows an AR(1) process, while $\beta' | \beta$ evolves independently. We assume $R = 1$ for simplicity.

The Bellman equation for this problem is:
\begin{equation}
v(a, y, \beta) = \max_{c \in [0, a+y]} \left\{ u(c) + \beta \cdot E_{y', \beta'}[v(a + y - c, y', \beta')] \right\}
\label{eq:stochastic-discount-bellman}
\end{equation}

The expectation is over both $y'$ and $\beta'$, which evolve independently according to their respective stochastic processes. Figure~\ref{fig:stochastic-discount} shows the influence diagram. The agent observes the state $(a, y, \beta)$ and chooses consumption $c$. The law of motion $a' = a + y - c$ determines next period's assets from current resources minus consumption. The bridge variables $(a', y', \beta')$ appear in overlapping boxes to indicate they belong to both the current period (as outputs of the transition) and the next period (as initial states).

\begin{figure}[h]
\centering
\begin{tikzpicture}[
    state/.style={circle, draw, minimum size=0.9cm, font=\small},
    decision/.style={rectangle, draw, minimum size=0.9cm, font=\small},
    utility/.style={diamond, draw, minimum size=0.7cm, font=\small, inner sep=1pt},
    noise/.style={circle, draw, double, minimum size=0.9cm, font=\small},
    discount/.style={regular polygon, regular polygon sides=5, draw, minimum size=1.1cm, font=\small, inner sep=1pt},
    >=stealth
]

\node[state] (a) at (0, 0) {$a$};
\node[state] (a') at (6, 0) {$a'$};
\node[state] (a'') at (12, 0) {$a''$};

\node[state] (y) at (0, -2) {$y$};
\node[state] (y') at (6, -2) {$y'$};
\node[state] (y'') at (12, -2) {$y''$};

\node[noise] (eps_y) at (4.5, -3.2) {$\epsilon_{y}$};
\node[noise] (eps_y') at (10.5, -3.2) {$\epsilon_{y'}$};

\node[decision] (c) at (2, -4) {$c$};
\node[utility] (u) at (3.5, -4) {$u$};
\node[decision] (c') at (8, -4) {$c'$};
\node[utility] (u') at (9.5, -4) {$u'$};

\node[discount] (beta) at (0, -6) {$\beta$};
\node[discount] (beta') at (6, -6) {$\beta'$};
\node[discount] (beta'') at (12, -6) {$\beta''$};

\node[noise] (eps_beta) at (4.5, -7.2) {$\epsilon_{\beta}$};
\node[noise] (eps_beta') at (10.5, -7.2) {$\epsilon_{\beta'}$};

\draw[->] (a) -- (c);
\draw[->] (y) -- (c);
\draw[->] (beta) -- (c);

\draw[->] (c) -- (u);

\draw[->] (a) -- (a');
\draw[->] (c) -- (a');
\draw[->] (y) -- (a');
\draw[->] (y) -- (y');
\draw[->] (eps_y) -- (y');
\draw[->] (beta) -- (beta');
\draw[->] (eps_beta) -- (beta');

\draw[->] (a') -- (c');
\draw[->] (y') -- (c');
\draw[->] (beta') -- (c');

\draw[->] (c') -- (u');

\draw[->] (a') -- (a'');
\draw[->] (c') -- (a'');
\draw[->] (y') -- (a'');
\draw[->] (y') -- (y'');
\draw[->] (eps_y') -- (y'');
\draw[->] (beta') -- (beta'');
\draw[->] (eps_beta') -- (beta'');

\begin{scope}[on background layer]
    \node[draw=gray, dashed, rounded corners, fit=(a)(y)(beta)(c)(u)(a')(y')(beta')(eps_y)(eps_beta), inner sep=0.4cm] (box1) {};
    \node[draw=gray, dashed, rounded corners, fit=(a')(y')(beta')(c')(u')(a'')(y'')(beta'')(eps_y')(eps_beta'), inner sep=0.4cm] (box2) {};
\end{scope}

\node[below=0.3cm of box1.south, font=\footnotesize\itshape] {Period $t$};
\node[below=0.3cm of box2.south, font=\footnotesize\itshape] {Period $t+1$};

\end{tikzpicture}
\caption{Influence diagram for the consumption-saving problem with stochastic discount factors. The agent observes $(a, y, \beta)$ and chooses consumption $c$. The bridge variables $(a', y', \beta')$ connect to the next period; the boxes overlap on these nodes to indicate they belong to both components. Income and discount factor each evolve via Markov processes driven by exogenous shocks. Pentagons denote discount factors; double circles denote exogenous noise variables.}
\label{fig:stochastic-discount}
\end{figure}

This model decomposes in the standard way. The state $a'$ depends on $(a, y, c)$ via the law of motion $a' = a + y - c$, while $\beta'$ depends only on $\beta$ through its stochastic transition, not on endogenous choices. Since both $y$ and $\beta$ evolve exogenously, they do not create additional channels between periods, and the orthomodularity condition remains satisfied. When $\beta = \bar{\beta}$ for all $t$, this reduces to the standard model. The stochastic formulation preserves the recursive structure that enables dynamic programming.

\section{Conclusion}

In this paper, we have approached the problem of modeling agents in computing systems with several desiderata in mind. These include:
\begin{itemize}
    \item Graphical causal modeling precisely captures the structural relationships between variables.
    \item The agents are goal directed, but they may be imperfectly informed, have insufficient recall of their past, and can act suboptimally.
    \item The models are (de)composability, such that the analyst can capture efficiency gains due to the strategic independence of control variables.
    \item The models can represent latent, unobserved state and multiple decision variables with different information sets within a single model or dynamic period.
    \item Limitations to cognitive resources and value discounting are endogenous to the model.
\end{itemize}

We introduce new modeling tools, SCDMs and SCDPs, which satisfy all of the above criteria.
These tools build on recent work in causal game theory, and extend it to the dynamic setting.
SCDMs are distinguished from prior constructs like SCIMs mainly because of how they allow for root variables to be ungoverned by probability distributions or structural equations.
These root nodes are what enable SCDMs to be composed and also converted into SCDPs.
We show that SCDPs are more expressive than POMDPs, particularly with respect to their ability to model resource rationality and variable discount factors.

In future work, we will extend SCDMs and SCDPs into multi-agent systems.
The groundwork has already been laid for this with causal game theory and Structural Causal Games.
We will also develop efficient algorithms for solving these models and fitting them to data that take advantage of the causal structure.
We are developing a scientific software library, \texttt{scikit-agent}, which implements this modeling framework and will later be a repository for implementations of these algorithms.

\begin{acks}
SB's contributions have been supported by NSF awards 2131532, 2131533 and 2105301, as well as the Future of Life Foundation, the Microsoft Corporation, and the Econ-ARK project. AL's contributions have been supported by NSF award 2131533.
Any opinions, findings, and conclusions or recommendations expressed in this paper are those of the authors and do not necessarily reflect the views of the sponsors.
\end{acks}

\bibliographystyle{ACM-Reference-Format}
\bibliography{references}

\appendix

\section{Reference}

\subsection{Structural Causal Games}

A Bayesian network is a graphical model of a joint distribution over random variables. 

\begin{definition}[Causal Bayesian Network (CBN) \citep{pearl2009causality}]\label{def.bayesian.network}
A causal Bayesian network (CBN) over set of random variables $\mathbf{V}$, parameters $\theta$, and joint probability distribution $\Pr(\mathbf{V}; \theta)$ is
a structure $\mathcal{M} = (\mathbf{V}, \mathcal{E}, \theta)$
such that:
\begin{itemize}
    \item $\mathcal{G} = (\mathbf{V}, \mathcal{E})$ is a directed acyclic graph with vertices $\mathbf{V}$ and edges $\mathcal{E}$, and
    \item $\Pr(\mathbf{v}; \theta) = \prod_{V \in \mathbf{V}} Pr(v | \mathbf{pa}_V ; \theta_V)$, where $\mathbf{pa}_V$ are the parents of $V$ on $\mathcal{G}$.
\end{itemize}
\end{definition}

The marginal probability distribution of each variable is conditional only on its parents. Together, these marginal conditional distributions constitute the original joint distribution $\Pr_\theta(V)$.
While this distribution can be represented by many different (Markov-equivalent) Bayesian network structures, the edge structure $\mathcal{E}$ is considered \textit{causal} with respect to interventions on the variable values \cite{pearl2009causality}.

A \emph{structural causal model} (SCM) is a Bayesian Network that has all of its stochasticity in its root nodes.
The exogenous variables -- those which have no parents -- are random variables.
The values of the endogenous variables -- which do have parent nodes -- are governed by structural equations or, equivalently, deterministic conditional probability distributions.

\begin{definition}[Structural Causal Model (SCM) \citep{pearl2009causality}]\label{def:scm}
A (Markovian) structural causal model is a CBN 
$\mathcal{M} = (\mathbf{W}, E, \boldsymbol{\theta})$
where $\mathbf{W} = (\mathbf{V} \cup \mathbf{Z})$,
with exogenous variables $\mathbf{Z}$ and endogenous variables $\mathbf{V}$ such that for all $Z \in \mathbf{Z}$, $\mathbf{pa}_Z = \emptyset$ and for all $V \in \mathbf{V}$, $\mathbf{pa}_Z \neq \emptyset$.
The parameters $\boldsymbol{\theta}$ assign deterministic distributions $\text{Pr}(V | \mathbf{pa}_V ; \theta_V)$ to each endogenous variable and a stochastic distribution $\text{Pr}(\mathbf{Z}; \boldsymbol{\theta}) = \prod_{Z \in \mathbf{Z}} \text{Pr}(Z; \theta_Z)$ to the
exogenous variables.
\end{definition}

A \emph{structural causal game} (SCG) is an SCM that reserves some of its variables as decision and utility variables, assigned to particular agents.

\begin{definition}[Structural Causal Game \citep{koller2003multi, hammond2023reasoning}]
\label{def:scg}
A structural causal game (SCG) is a structure $\mathcal{G} = (N, \mathbf{W}, E, \boldsymbol{\theta})$ where
\begin{itemize}
    \item $N = \{1, \ldots, n\}$ is a set of agents
    \item $(\mathbf{W}, E, \boldsymbol{\theta})$ is an SCM with endogenous variables $\mathbf{V} \subset W$.
    \item $\mathbf{V}$ is partitioned into nature variables $\mathbf{X}$, decision variables $\mathbf{D} = \cup_{i \in N} \mathbf{D}_i$, and utility variables $\mathbf{U} = \cup_{i \in N} \mathbf{U}_i$
\end{itemize}
\end{definition}

A single-player variation of an SCG is the SCIM work of \citet{hammond2023reasoning}:

\begin{definition}[Structural Causal Influence Model (SCIM) \cite{hammond2023reasoning} ]
\label{def:scim}
A \emph{structural causal influence model} $(\mathbf{V}, \mathbf{E}, \mathcal{E}, (\mathbf{X}, \mathbf{D}, \mathbf{U}), \mathbf{Pr}, \boldsymbol{\theta})$ consists of:
\begin{itemize}
\item A set of endogenous variables $\mathbf{V}$
\item A set of exogenous variables $\mathbf{E} = \{E_V\}_{V \in \mathbf{V}}$.
\item A graph $\mathcal{G} = (\mathbf{E} \cup \mathbf{V}, \mathcal{E})$ is a DAG over $\mathbf{E}$ and $\mathbf{V}$ where $\mathbf{Pa}_V \cap \mathbf{E} = \{E_V\}$
\item $\mathbf{V}$ is partitioned into: \begin{itemize}
\item $\mathbf{X}$, state variables
\item $\mathbf{D}$, decision variables\footnote{The influence diagram literature rarely intersects with the control theory literature. Decision variables and control variables are roughly synonymous.}
\item $\mathbf{U}$, utility variables
\end{itemize}
\item  Distribution $\mathbf{Pr}$ and parameters $\boldsymbol{\theta}$ assign deterministic distributions $Pr(x | \mathbf{pa}_x, \theta_X)$ to each state variable \footnote{The use of probability distributions for the deterministic variable assignments are a consequence of this construct's being derived from work on Bayesian networks. We will address this directly in a moment.},  $Pr(u | \mathbf{pa}_u, \theta_U)$ to each utility variable, and a stochastic distribution $\mathbf{Pr}(\mathbf{E}, \boldsymbol{\theta}) = \prod_{E \in \mathbf{E}} Pr(E; \theta_E)$ to the exogenous variables.
\end{itemize}
\label{def:scim-apx}
\end{definition}

The SCDM definition used in this paper is indebted to this prior work and is an extension of these constructs.

\subsection{Strategic reliance}
\label{sec:strategic reliance}

For multi-agent causal influence models (of which the  SCIMs of Definition \ref{def:scim} are a subclass), \citet{koller2003multi} identify graphical criteria for the strategic relevance of decision variables on each other, which allow for efficient algorithms for finding the Nash equilibria. SCIMs are a simpler problem, and the same logic applies.

\begin{definition}[Strategic reliance \cite{koller2003multi}]
\label{def:strategic-reliance}
A decision variable $D$ \emph{strategically relies} on a
decision variable $D'$ in an SCIM if there are two policy
profiles $\pi$ and $\pi'$ such that $\pi$ and $\pi'$ differ only at $D'$, but some decision rule for $D$ is optimal for $\pi$ and not for $\pi'$.
\end{definition}

For an SCIM, let $U_D = \mathbf{U} \cap \text{Desc}(D)$ be those utility variables that are descendants of decision variable $D$.

\begin{definition}[S-reachability \cite{koller2003multi}]
A node $D'$ is \emph{s-reachable} from a node $D$ in an
SCIM if there is some utility node $U \in U_D$ such that if a
new parent $\hat{D'}$ were added to $D'$, there would be an active
path from $\hat{D'}$ to $Pa(U)$ given $Pa(D) \cup {D}$.
\label{def:s-reachability}
\end{definition}

The definition of \emph{active path} comes from earlier work on causal graphical models.

\begin{definition}[d-separation; active path \cite{verma1988influence}]
\label{def:d-separation}
On a directed graph, two sets of nodes $X$ and $Y$ are d-separated given a third set $Z$ if and only if there is no active bi-directed path from a node in $X$ to a node in $Y$.

A path between two nodes is an \emph{active path} given a set of nodes $Z$ if every node with converging arrows (a collider) either is or has a descendant in $Z$ and every other node along the path is not in $Z$.
\end{definition}

\citet{koller2003multi} then prove the connection between S-reachability and strategic reliance.

\begin{theorem}[Soundness \cite{koller2003multi}]
If $D$ and $D'$ are two decision nodes in a SCIM and $D'$ is not s-reachable from $D$ in the
MAID, then $D$ does not rely on $D'$.
\label{thm:km-soundness}
\end{theorem}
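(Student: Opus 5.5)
The plan is the standard Koller--Milch argument, specialized to a single agent. We characterize optimality of a decision rule for $D$ locally, and then use the graphical hypothesis to show that this local criterion cannot depend on the rule chosen at $D'$.

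\textbf{Local criterion.} Fix a profile $\boldsymbol{\pi}$.
\begin{itemize}
\item A utility $U \notin U_D$ is not a descendant of $D$, so its expectation is unaffected by the choice of $\pi_D$.
\item Hence a rule $\pi_D$ is optimal against $\boldsymbol{\pi}_{-D}$ exactly when, for each parent configuration $\mathbf{pa}_D$, the value $\pi_D(\mathbf{pa}_D)$ maximizes
$$\sum_{U \in U_D} E_{\boldsymbol{\pi}}\left[ U \mid \mathbf{pa}_D, D = d\right]$$
over the admissible $d$.
\item Since each $U$ is a deterministic function of $\mathbf{Pa}_U$, this objective is determined by the conditional distributions $\Pr_{\boldsymbol{\pi}}(\mathbf{Pa}_U \mid \mathbf{pa}_D, d)$, for $U \in U_D$.
\end{itemize}

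\textbf{Encoding the rule at $D'$ as a variable.} Introduce the new parent $\hat{D'}$ of $D'$ from Definition~\ref{def:s-reachability}, with domain the set of admissible rules for $D'$. The structural equation of $D'$ becomes $D' = \hat{D'}(\mathbf{Pa}_{D'})$. Two profiles $\boldsymbol{\pi}, \boldsymbol{\pi}'$ that differ only at $D'$ then correspond to the same augmented causal model with $\hat{D'}$ set to two different values. Because $D'$ is not s-reachable from $D$, for every $U \in U_D$ there is no active path from $\hat{D'}$ to $\mathbf{Pa}_U$ given $\mathbf{Pa}_D \cup \{D\}$.

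\textbf{Invariance.} By soundness of d-separation (the global Markov property of the augmented Bayesian network), $\mathbf{Pa}_U \perp \hat{D'} \mid \mathbf{Pa}_D, D$. So each conditional $\Pr(\mathbf{Pa}_U \mid \mathbf{pa}_D, d)$ is the same under $\boldsymbol{\pi}$ and $\boldsymbol{\pi}'$. The local objective, and therefore the set of optimal rules for $D$, coincides for the two profiles. This contradicts Definition~\ref{def:strategic-reliance}, so $D$ does not rely on $D'$.

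\textbf{Main obstacle: zero-probability configurations.} The local objective is only defined on parent configurations $\mathbf{pa}_D$ that have positive probability. Their probabilities may themselves depend on $\hat{D'}$, since d-separation was asserted only for $\mathbf{Pa}_U$, not for $\mathbf{Pa}_D$. A rule that is arbitrary on a null configuration could then be optimal under $\boldsymbol{\pi}$ but not under $\boldsymbol{\pi}'$. I would handle this as Koller and Milch do:
\begin{itemize}
\item either restrict to fully mixed profiles, for example by assuming the exogenous noise gives every configuration positive probability;
\item or define optimality configuration-wise for every $\mathbf{pa}_D$, conditioning via intervention $do(\mathbf{Pa}_D = \mathbf{pa}_D)$.
\end{itemize}
Under either convention the objective is defined for all $\mathbf{pa}_D$ and the invariance step goes through unchanged.

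A secondary check is that d-separation in the augmented graph correctly encodes the dependence on the decision rule. This holds because $\hat{D'}$ is a root whose only child is $D'$, so the augmented network is still a DAG whose factorization reproduces the original model for each fixed rule.
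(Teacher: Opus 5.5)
Your core argument is exactly the Koller--Milch argument: local optimality over configurations of $\mathbf{Pa}_D$, the rule at $D'$ encoded as a root $\hat{D'}$, and d-separation making $\Pr(\mathbf{Pa}_U \mid \mathbf{pa}_D, d)$ independent of $\hat{D'}$. The paper gives no proof of its own and only cites theirs.

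The gap is your treatment of null configurations, and it is decisive rather than technical. With the paper's Definition~\ref{def:strategic-reliance}, the statement is false. Let $D'$ be parentless and binary, with $X := D'$, $\mathbf{Pa}_D = \{X\}$ and $U := D$. Every path from $\hat{D'}$ to $U$ (or to $\mathbf{Pa}_U=\{D\}$) runs through $X$ as a non-collider in $\mathbf{Pa}_D$, so $D'$ is not s-reachable from $D$. Yet with $\pi_{D'}=0$ and $\pi'_{D'}=1$, the rule $\pi_D(0)=1$, $\pi_D(1)=0$ is optimal for $\boldsymbol{\pi}$ and not for $\boldsymbol{\pi}'$.

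Koller and Milch do not fix this with positivity or interventions; they weaken the definition of reliance. Under their definition, $D$ relies on $D'$ only if some $\delta$ optimal for $\boldsymbol{\pi}$ admits no $\delta^*$ that agrees with $\delta$ wherever $\Pr_{\boldsymbol{\pi}}(\mathbf{pa}_D)>0$ and is optimal for $\boldsymbol{\pi}'$. With that definition your invariance step closes the proof with no extra hypothesis. Keep $\delta$ on $\boldsymbol{\pi}$-positive configurations and use a $\boldsymbol{\pi}'$-maximizer elsewhere. On configurations positive under both profiles, invariance makes $\delta$ optimal for $\boldsymbol{\pi}'$, and $\boldsymbol{\pi}'$-null configurations are irrelevant.

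Of your two conventions, the first is sound but is an added hypothesis, not part of the statement. The second fails, as this example shows. Take $L\sim\text{Bern}(1/2)$, $X_2 := L$, $X_1 := D'$, $\mathbf{Pa}_D=\{X_1,X_2\}$, and $U=1$ if $D=L$, else $0$. The only path out of $\hat{D'}$ is again blocked at $X_1$.

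\begin{itemize}
\item Conditioning via $\mathrm{do}(\mathbf{Pa}_D=\mathbf{pa}_D)$ everywhere gives every action the score $1/2$ at every configuration. So a rule that ignores the perfectly informative $X_2$ counts as optimal. You would be proving invariance of a different objective, one that discards the evidential content of observations. Your observational d-separation also does not transfer to it ``unchanged''.
\item Using $\mathrm{do}$ only at null configurations breaks invariance. Under $\pi_{D'}=0$, the rule $\pi_D(0,x_2)=x_2$, $\pi_D(1,x_2)=1-x_2$ is optimal by that convention. But it earns $0$ once $\pi'_{D'}=1$ makes $X_1=1$ occur.
\end{itemize}

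Two smaller points in your main step tie back to the same issue. First, $E_{\boldsymbol{\pi}}[U\mid \mathbf{pa}_D, D=d]$ for $d\neq \pi_D(\mathbf{pa}_D)$ must be computed under a fully mixed rule at $D$. Second, reading d-separation as conditional independence needs a distribution on $\hat{D'}$ that charges both rules. The resulting equality holds only where $(\mathbf{pa}_D,d)$ has positive probability under both profiles, which is precisely what the Koller--Milch clause accommodates.
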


\citet{koller2003multi} outline a general algorithm for identifying the reliance structure of decision nodes in a multi-agent influence diagram and efficiently solving for their optimal and equilibrium strategy profiles.

\end{document}